\documentclass[runningheads,9pt]{llncs}

\usepackage{amsmath}
\usepackage{amssymb}
\usepackage{graphicx}
\usepackage{fink}
\usepackage[table]{xcolor}
\usepackage[font=footnotesize]{caption}
\usepackage{stmaryrd}
\usepackage{multirow}
\usepackage{mathpartir}
\usepackage{subfig}
\usepackage{caption}
\usepackage{float}
\usepackage{hyperref}
\usepackage{algorithm,algorithmic}

\newcommand{\dist}[1]{\text{\em Dist}(#1)}

\renewcommand{\L}{\mathcal{L}}

\newcommand{\dirac}[1]{\delta_{#1}}
\newcommand{\ie}{{\em i.e. }}

\newcommand{\preorder}{\preceq}
\newcommand{\dpreorder}{\sqsubseteq}
\newcommand{\trans}[1]{\xrightarrow{#1}}

\newcommand{\supp}[1]{\text{{\em Supp}}(#1)}
\newcommand{\project}[1]{\upharpoonright_{#1}}

\newcommand{\rat}{\mathbb{Q}}
\newcommand{\product}{\otimes}

\newcommand{\old}{\text{{\em old}}}
\newcommand{\java}{Java$^{\text{TM}}$ }

\newtheorem{notation}{Notation}

\title{Assume-Guarantee Abstraction Refinement\\ for Probabilistic Systems
\thanks{
This research was sponsored by DARPA META II, GSRC, NSF, 
SRC, GM, ONR under contracts FA8650-10C-7079,
1041377 (Princeton University), CNS0926181/CNS0931985, 2005TJ1366, 
GMCMUCRLNV301, N000141010188, respectively, and the CMU-Portugal Program.
This is originally published by Springer-Verlag as part of the proceedings of
CAV 2012 and is available at www.springerlink.com. URL for the publication :
\url{http://dx.doi.org/10.1007/978-3-642-31424-7_25}.
}}
\titlerunning{AGAR for Probabilistic Systems}

\author{Anvesh Komuravelli\inst{1} \and Corina S. P\u{a}s\u{a}reanu\inst{2} \and
Edmund M. Clarke\inst{1}}
\authorrunning{A.~Komuravelli et al.}
\institute{Computer Science Department, Carnegie Mellon University, Pittsburgh,
PA, USA
\and Carnegie Mellon Silicon Valley, NASA Ames, Moffett Field, CA, USA\\
}


\begin{document}
\maketitle

\begin{abstract}
We describe an automated technique for assume-guarantee style
checking of strong simulation between a system and a specification, both
expressed as non-deterministic Labeled Probabilistic Transition
Systems (LPTSes). We first characterize counterexamples to strong
simulation as {\em stochastic} trees and show that simpler structures
are insufficient. Then, we use these trees in an abstraction
refinement algorithm that computes the assumptions for
assume-guarantee reasoning as conservative LPTS abstractions of some
of the system components. The abstractions are automatically refined
based on tree counterexamples obtained from failed simulation checks
with the remaining components.  
We have implemented the algorithms for counterexample generation and
assume-guarantee abstraction refinement 
and report encouraging results.
\end{abstract}

\section{Introduction}

Probabilistic systems are increasingly used for the formal modeling and analysis
of a wide variety of systems ranging from randomized communication and security protocols to nanoscale computers and biological
processes.
Probabilistic model checking is an
automatic technique for the verification of such systems against
formal specifications \cite{BK_book}. However, as in the classical
non-probabilistic case \cite{CGP_book}, it 
suffers from the {\em state explosion} problem, where the state space
of a concurrent system grows exponentially in the number of its
components.


Assume-guarantee style compositional techniques~\cite{Pnueli_lmcs85}
address this problem by decomposing the verification of a system into
that of its  smaller components and composing back the
results, without verifying the whole system directly. When checking
individual components, the method uses {\em assumptions} about the
components' environments and then, discharges them on the rest of the
system. For a system of two components, such reasoning is captured by
the following simple assume-guarantee rule.

\vspace{-0.1in}
  \begin{mathpar}
  \inferrule*[right=$(${\sc ASym}$)$]
            {1: L_1 \parallel A \preorder P \\ 2: L_2 \preorder A}
            {L_1 \parallel L_2 \preorder P}
  \end{mathpar}
\vspace{-0.2in}

Here $L_1$ and $L_2$ are system components, $P$ is a specification
to be satisfied by the composite system and $A$ is
an assumption on $L_1$'s environment, to be discharged on $L_2$.
Several other such
rules have been proposed, some of them involving symmetric~\cite{PGB+_fmsd08} or
circular~\cite{AHJ_concur01,PGB+_fmsd08,KNP+_tacas10} reasoning. 
Despite its simplicity, rule {\sc ASym} has been proven the most effective in practice and
studied extensively~\cite{PGB+_fmsd08,CCS+_cav05,FKP_fase11}, mostly in the context of
non-probabilistic reasoning.

We consider here the {\em automated} assume-guarantee style compositional
verification of {\em Labeled Probabilistic Transition Systems}
(LPTSes), whose transitions have both probabilistic and non-deterministic
behavior. The verification is performed using the rule {\sc ASym} where
$L_1$, $L_2$, $A$ and $P$ are LPTSes and the conformance relation $\preorder$ is instantiated with
{\em strong simulation} \cite{SL_nordic95}.
We chose strong simulation for the following reasons.  Strong
simulation is a decidable, well studied relation between
specifications and implementations, both for non-probabilistic
\cite{Milner_tr71} and probabilistic \cite{SL_nordic95} systems. A
method to help scale such a check is of a natural interest.
Furthermore, rule {\sc ASym} is both sound and complete for this relation.
Completeness is obtained trivially by replacing $A$ with $L_2$ but is essential for
full automation (see Section~\ref{sec:agar}). One can argue that strong
simulation is too fine a relation to yield suitably small
assumptions. However, previous success in using strong simulation in
non-probabilistic compositional verification \cite{Chaki_thesis05} motivated us
to consider it in a probabilistic setting as well.
And we shall see that indeed we can obtain small assumptions for the examples we consider
while achieving savings in time and memory (see Section \ref{sec:results}).


The main challenge in automating assume-guarantee reasoning is to 
come up with such small assumptions satisfying the premises. 
In the non-probabilistic case, solutions to this problem
have been proposed which use either automata learning techniques
\cite{PGB+_fmsd08,CCS+_cav05} or abstraction refinement
\cite{BPG_cav08} and several improvements and optimizations followed.
For probabilistic systems, techniques using automata learning 
have been proposed. They target {\em probabilistic reachability} checking and are not guaranteed to terminate due to incompleteness of the assume-guarantee rules \cite{FKP_fase11} or to the undecidability of the conformance relation and learning algorithms used
\cite{FHK+_atva11}.


In this paper we propose a complete, fully automatic framework for the
compositional verification of LPTSes with respect to simulation
conformance. One fundamental ingredient of the framework is the use of
{\em counterexamples} (from failed simulation checks) to iteratively
refine inferred assumptions.
Counterexamples are also extremely useful in general to help with debugging of discovered errors.
However, to the best of our
knowledge, the notion of a counterexample has not been previously
formalized for strong simulation between probabilistic systems. As our
first contribution we give a characterization of
counterexamples to strong simulation  as {\em stochastic} trees
and an algorithm to compute them; we also show that simpler structures
are insufficient in general (Section~\ref{sec:cex}).

We then propose an assume-guarantee abstraction-refinement (AGAR)
algorithm (Section \ref{sec:agar}) to automatically build the assumptions used in compositional
reasoning. 
The algorithm follows previous work~\cite{BPG_cav08} which, however, was done in a
non-probabilistic, trace-based setting. 
In our approach, $A$ is maintained as a {\em conservative abstraction} of $L_2$,
\ie an LPTS that simulates $L_2$ (hence, 
premise 2 holds by construction), and is iteratively refined based on
tree counterexamples obtained from checking premise 1.
The iterative process 
is guaranteed to terminate, with the number of iterations bounded by 
the number of states in $L_2$.
When $L_2$ itself is composed of multiple components, the second
premise ($L_2 \preorder A$) is viewed as a new compositional check,
generalizing the approach to $n \ge 2$ components.  AGAR can be further
applied to the case where the specification $P$ is instantiated with a
formula of a logic preserved by strong simulation, such as {\em safe}-pCTL.

We have implemented the algorithms for
counterexample generation and for AGAR
using \java and Yices \cite{yices} and show experimentally that AGAR
can achieve significantly better performance than non-compositional
verification.


\vspace{0.1in}
\noindent{\bf Other Related Work.}
Counterexamples to strong simulation have been characterized
before as tree-shaped structures for the case of non-probabilistic
systems~\cite{Chaki_thesis05} which we generalize to stochastic trees in Section \ref{sec:cex} for the probabilistic
case.  Tree counterexamples have also been used in the context of a compositional framework that uses
rule {\sc ASym} for checking strong simulation  in
the non-probabilistic case \cite{CCS+_cav05} and employs tree-automata
learning to build deterministic assumptions.


AGAR is a variant of the well-known CounterExample Guided Abstraction Refinement (CEGAR) approach
\cite{CGJ+_cav00}. CEGAR has been adapted to probabilistic systems, in the
context of probabilistic reachability \cite{HWZ_cav08} and safe-pCTL \cite{CV_tocl10}. The
CEGAR approach we describe in Section \ref{sec:cegar} is an adaptation of the
latter. Both these works consider abstraction refinement in a monolithic, non-compositional
setting. On the other hand, AGAR uses counterexamples from checking one
component to refine the abstraction of another component.

\section{Preliminaries}
\label{sec:prelims}
\noindent{\bf Labeled Probabilistic Transition Systems.}
Let $S$ be a non-empty set. $\dist{S}$ is defined
to be the set of discrete probability distributions 
over $S$. We assume that all the
probabilities specified explicitly in a distribution are rationals in $[0,1]$;
there is no unique
representation for all real numbers on a computer and floating-point numbers are
essentially rationals.
For $s \in S$, $\dirac{s}$ is the Dirac distribution on $s$, \ie
$\dirac{s}(s) = 1$ and $\dirac{s}(t) = 0$ for all $t \neq s$. For
$\mu \in \dist{S}$, the {\em support} of $\mu$,
denoted $\supp{\mu}$, is defined to be the set $\{s \in S |
\mu(s) > 0\}$ and for $T \subseteq S$, $\mu(T)$ stands for
$\sum_{s \in T} \mu(s)$.
The models we consider, defined below, have both probabilistic and
non-deterministic behavior. Thus, there can be a non-deterministic choice
between two probability distributions, even for the same action. Such modeling
is mainly used for underspecification and moreover, the abstractions we consider (see Definition \ref{def:quotient_lpts})
naturally have this non-determinism. As we see below, the theory described
does not become any simpler by disallowing non-deterministic choice for a given
action (Lemmas \ref{lem:cex_fully_prob_insuff} and
\ref{lem:cex_reactive_insuff}).


\begin{definition}[LPTS]
A {\em Labeled Probabilistic Transition System} $($LPTS$)$ is a tuple $\langle
S, s^0, \alpha, \tau \rangle$ where $S$ is a set of states, $s^0
\in S$ is a distinguished start state, $\alpha$ is a set of actions and $\tau
\subseteq S \times \alpha \times \dist{S}$ is a probabilistic transition
relation.
For $s \in S$, $a \in \alpha$ and $\mu \in \dist{S}$, we denote
$(s,a,\mu) \in \tau$ by $s \trans{a} \mu$ and say that $s$ has a
{\em transition} on $a$ to $\mu$.

An LPTS is called {\em reactive} if $\tau$ is a partial function from $S \times
\alpha$ to $\dist{S}$ $(${\em \ie}at most one transition on a given action from a given
state$)$ and {\em fully-probabilistic} if $\tau$ is a partial function from $S$
to $\alpha \times \dist{S}$ $(${\em \ie}at most one transition from a given state$)$.
\end{definition}

\begin{figure}[t]
\centering
\includegraphics[scale=1.8]{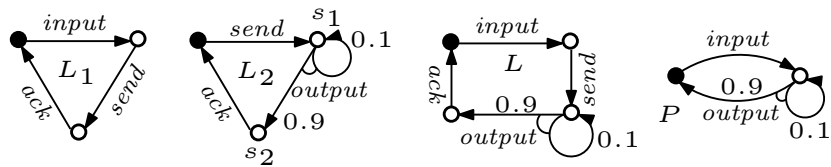}
\caption{Four reactive and fully-probabilistic LPTSes.}
\vspace{-0.2in}
\label{fig:io}
\end{figure}


Figure \ref{fig:io} illustrates LPTSes. Throughout this paper, we use
filled circles to denote start states in the pictorial representations of
LTPSes. For the distribution $\mu = \{(s_1,0.1),(s_2,0.9)\}$,
$L_2$ in the figure has the transition $s_1 \trans{output} \mu$. All the LPTSes
in the figure are {\em reactive} as no state has more than one transition on a given
action.  They are also {\em fully-probabilistic} as no state has more
than one transition. In the literature, an LPTS is also called a {\em
  simple probabilistic automaton} \cite{SL_nordic95}. Similarly, a
reactive (fully-probabilistic) LPTS is also called a (Labeled) {\em Markov Decision Process}
({\em Markov Chain}). Also, note that an LPTS with all the distributions
restricted to Dirac distributions is the classical (non-probabilistic)
{\em Labeled Transition System} (LTS); thus a {\em reactive} LTS
corresponds to the standard notion of a {\em deterministic}
LTS. For example, $L_1$ in Figure \ref{fig:io} is a reactive (or deterministic)
LTS. We only consider finite state, finite
alphabet and finitely branching (\ie finitely many transitions from
any state) LPTSes.

We are also interested in LPTSes with a tree structure, \ie the start state is
not in the support of any distribution and every other state is in the support
of exactly one distribution. We call such LPTSes {\em stochastic trees} or simply,
{\em trees}.

We use $\langle S_i, s^0_i, \alpha_i, \tau_i
\rangle$ for an LPTS $L_i$ and $\langle S_L, s^0_L, \alpha_L, \tau_L \rangle$
for an LPTS $L$. The following notation is used in Section \ref{sec:agar}.





\begin{notation}
For an LPTS $L$ and an alphabet $\alpha$ with $\alpha_L \subseteq \alpha$,
$L^\alpha$ stands for the LPTS $\langle S_L, s^0_L, \alpha, \tau_L \rangle$.
\end{notation}


%


Let $L_1$ and $L_2$ be two LPTSes and $\mu_1 \in \dist{S_1}$,
$\mu_2 \in \dist{S_2}$.

\begin{definition}[Product \cite{SL_nordic95}]
\label{def:product}
The product of $\mu_1$ and $\mu_2$, denoted $\mu_1 \product \mu_2$, is a
distribution in $\dist{S_1 \times S_2}$, such that $\mu_1 \product
\mu_2 : (s_1,s_2) \mapsto \mu_1(s_1) \cdot \mu_2(s_2)$.
\end{definition}

\begin{definition}[Composition \cite{SL_nordic95}]
\label{def:composition}
The parallel composition of $L_1$ and $L_2$, denoted $L_1 \parallel L_2$, is
defined as the LPTS $\langle S_1 \times S_2, (s^0_1,s^0_2),
\alpha_1 \cup \alpha_2, \tau \rangle$ where $((s_1,s_2),a,\mu) \in \tau$ iff
  \begin{enumerate}
  \item $s_1 \trans{a} \mu_1$, $s_2 \trans{a} \mu_2$ and $\mu =
\mu_1 \product \mu_2$, or
  \item $s_1 \trans{a} \mu_1$, $a \not\in \alpha_2$ and $\mu = \mu_1
\product \dirac{s_2}$, or
  \item $a \not\in \alpha_1$, $s_2 \trans{a} \mu_2$ and $\mu = \dirac{s_1}
\product \mu_2$.
  \end{enumerate}
\end{definition}

For example, in Figure \ref{fig:io}, $L$ is the composition of $L_1$ and $L_2$.

%


\vspace{0.1in}
\noindent{\bf Strong Simulation.}
For two LTSes, a pair of states
belonging to a strong simulation relation depends on whether certain
other pairs of successor states also belong to the relation
\cite{Milner_tr71}. For LPTSes, one has successor {\em distributions}
instead of successor states; a pair of states belonging to a strong
simulation relation $R$ should now depend on whether certain other
pairs in the {\em supports} of the successor distributions also
belong to $R$. Therefore we define a binary relation on
distributions, $\dpreorder_R$, which depends on the relation $R$
between states.
Intuitively, two distributions can be related if we can pair the states in their
support sets, the pairs contained in $R$, {\em matching all} the probabilities under the
distributions. 

Consider an example with $s R t$ and the transitions $s \trans{a}
\mu_1$ and $t \trans{a} \mu_2$ with $\mu_1$ and $\mu_2$ as in Figure
\ref{fig:dist_rel}(a). In this case, one easy way to match the probabilities is to pair $s_1$ with $t_1$ and
$s_2$ with $t_2$. This is sufficient if $s_1 R t_1$ and $s_2 R t_2$ also hold,
in which case, we say that $\mu_1 \dpreorder_R \mu_2$.
However, such a direct matching may not be possible in general, as is the case in
Figure \ref{fig:dist_rel}(b). One can
still obtain a matching by {\em splitting} the probabilities under the distributions in such a way
that one can then directly match the probabilities as in Figure
\ref{fig:dist_rel}(a). Now, if $s_1 R t_1$, $s_1 R
t_2$, $s_2 R t_2$ and $s_2 R t_3$ also hold, we say that $\mu_1 \dpreorder_R \mu_2$.
Note that there can be more than one possible splitting.
This is the central idea behind the following definition where the splitting is achieved by a {\em
weight function}. 
Let $R \subseteq S_1 \times S_2$.

\begin{figure}[t]
\centering
  \subfloat[] {
  \includegraphics[scale=1.5]{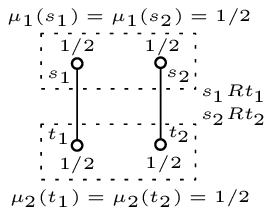}
  }
\hspace{0.5in}
  \subfloat[] {
  \includegraphics[scale=1.5]{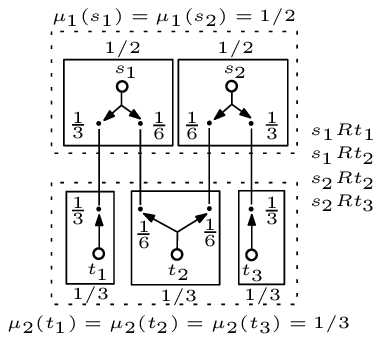}
  }
\caption{Explaining $\mu_1 \dpreorder_R \mu_2$ by means of splitting (indicated
by arrows) and matching (indicated by solid lines) the probabilities.}
\vspace{-0.2in}
\label{fig:dist_rel}
\end{figure}



\begin{definition}[\cite{SL_nordic95}]
\label{def:weight_function_based}
$\mu_1 \dpreorder_R \mu_2$ iff there is a {\em weight} function $w :
S_1 \times S_2 \rightarrow \rat \cap [0,1]$ such that
\begin{enumerate}
  \item $\mu_1(s_1) = \sum_{s_2 \in S_2} w(s_1, s_2)$ for all $s_1 \in S_1$,
  \item $\mu_2(s_2) = \sum_{s_1 \in S_1} w(s_1, s_2)$ for all $s_2 \in S_2$,
  \item $w(s_1, s_2) > 0$ implies $s_1 R s_2$ for all $s_1 \in S_1$, $s_2
\in S_2$.
  \end{enumerate}
\end{definition}

$\mu_1 \dpreorder_R \mu_2$ can be checked by computing the maxflow in
an appropriate network and checking if it equals $1.0$ \cite{Baier_habilitation98}.
If $\mu_1 \dpreorder_R \mu_2$ holds, $w$ in the above definition is one such
maxflow function.
As explained above, $\mu_1 \dpreorder_R \mu_2$ can be understood as {\em
matching} all the probabilities (after splitting appropriately) under $\mu_1$ and $\mu_2$. Considering
$\supp{\mu_1}$ and $\supp{\mu_2}$ as two partite sets, this is the weighted analog
of saturating a partite set in bipartite matching, giving us the following
analog of the well-known Hall's Theorem for saturating
$\supp{\mu_1}$.

\begin{lemma}[\cite{Zhang_thesis08}]
\label{lem:image_based}
$\mu_1 \dpreorder_R \mu_2$ iff for every $S \subseteq \supp{\mu_1}$, $\mu_1(S)
\leq \mu_2(R(S))$.
\end{lemma}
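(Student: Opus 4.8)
The plan is to prove both directions of the equivalence, using the max-flow characterization of $\dpreorder_R$ mentioned just before the statement, together with the classical max-flow/min-cut theorem — this is exactly the route by which Hall-type theorems are usually derived in their weighted form. First I would set up the flow network $\flownet{\mu_1}{\mu_2}{R}$: a source node connected to each $s_1 \in \supp{\mu_1}$ with capacity $\mu_1(s_1)$, each $s_2 \in \supp{\mu_2}$ connected to a sink with capacity $\mu_2(s_2)$, and an edge of capacity $1$ (or $\infty$) from $s_1$ to $s_2$ whenever $s_1 R s_2$. By the remark preceding the lemma, $\mu_1 \dpreorder_R \mu_2$ holds iff this network has a flow of value $1$, equivalently iff the source-side cut (saturating all source edges) is a minimum cut, i.e. every cut has capacity at least $1$.

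For the easy direction ($\Rightarrow$), suppose $\mu_1 \dpreorder_R \mu_2$ with weight function $w$. Given $S \subseteq \supp{\mu_1}$, sum condition (1) over $s_1 \in S$ to get $\mu_1(S) = \sum_{s_1 \in S}\sum_{s_2} w(s_1,s_2)$; by condition (3) the only nonzero terms have $s_2 \in R(S)$, so $\mu_1(S) \le \sum_{s_2 \in R(S)}\sum_{s_1 \in S_1} w(s_1,s_2) = \sum_{s_2 \in R(S)} \mu_2(s_2) = \mu_2(R(S))$ by condition (2). For the converse ($\Leftarrow$), assume the Hall-type condition $\mu_1(S) \le \mu_2(R(S))$ for all $S \subseteq \supp{\mu_1}$, and I would show every cut in the network has capacity $\ge 1$. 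A finite cut corresponds to a choice of a set $A \subseteq \supp{\mu_1}$ whose source-edges are cut and a set $B \subseteq \supp{\mu_2}$ whose sink-edges are cut, such that all $R$-edges from $\supp{\mu_1}\setminus A$ must land in $B$ (otherwise an uncut augmenting path survives); in particular $R(\supp{\mu_1}\setminus A) \subseteq B$. The cut capacity is then $\mu_1(A) + \mu_2(B) \ge \mu_1(A) + \mu_2(R(\supp{\mu_1}\setminus A)) \ge \mu_1(A) + \mu_1(\supp{\mu_1}\setminus A) = \mu_1(\supp{\mu_1}) = 1$, using the hypothesis on the set $\supp{\mu_1}\setminus A$. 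Hence the min cut is $\ge 1$, it is in fact exactly $1$ (the trivial source cut), so a flow of value $1$ exists, giving the weight function $w$ and $\mu_1 \dpreorder_R \mu_2$.

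The main obstacle, and the one step that deserves care, is the exact correspondence between cuts of the network and pairs $(A,B)$ with the containment property $R(\supp{\mu_1}\setminus A) \subseteq B$ — one has to argue that $R$-edges are assigned large enough capacity that no minimum cut ever severs one, so that the only way to disconnect source from sink is to cut source-edges and sink-edges consistently with the bipartite adjacency. Setting $R$-edge capacities to $\infty$ (or to any value $\ge 1$, since the total flow we seek is $1$) makes this clean. A secondary technical point is rationality: since $\mu_1,\mu_2$ have rational values, the max-flow can be taken rational-valued (e.g. by the integrality/rationality of max-flow with rational capacities), so $w$ indeed maps into $\rat \cap [0,1]$ as Definition~\ref{def:weight_function_based} requires. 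With these two points handled, both directions are short.
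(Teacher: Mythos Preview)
The paper does not actually supply a proof of this lemma: it is quoted from Zhang's thesis~\cite{Zhang_thesis08}, and the only argument offered is the surrounding prose, which points out that $\mu_1 \dpreorder_R \mu_2$ is equivalent to maxflow $=1$ in the associated bipartite network and that the statement is therefore the weighted analogue of Hall's theorem. Your proposal is precisely the standard way one turns that remark into a proof, so in spirit you are doing exactly what the paper's discussion suggests.

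Your argument is correct. Two small remarks on the points you yourself flagged. First, on cuts: with $R$-edge capacities set to $\infty$ your description of finite cuts as pairs $(A,B)$ with $R(\supp{\mu_1}\setminus A)\subseteq B$ is exact; with capacity $1$ it is not, but any cut severing an $R$-edge then already has capacity $\ge 1$, so the case split is harmless either way. Second, on recovering $w$ from a value-$1$ flow $f$: condition~(1) is immediate from saturation of the source edges, condition~(3) from the network structure, but condition~(2) needs a word --- since $\sum_{s_2}\mu_2(s_2)=1$ equals the total flow into the sink, the inequalities $\sum_{s_1} f(s_1,s_2)\le \mu_2(s_2)$ must all be equalities. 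With that observation (and the rationality remark you already made), both directions go through as written.
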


It follows that when $\mu_1 \not\dpreorder_R \mu_2$, there exists a witness $S
\subseteq \supp{\mu_1}$ such
that $\mu_1(S) > \mu_2(R(S))$. For example, if $R(s_2) = \emptyset$ in Figure
\ref{fig:dist_rel}(a), its probability $\frac{1}{2}$ under $\mu_1$ cannot be
matched and $S = \{s_2\}$ is a witness subset.

\begin{definition}[Strong Simulation \cite{SL_nordic95}]
\label{def:strong_simulation}
$R$ is a {\em strong simulation} iff for every $s_1 R s_2$ and $s_1 \trans{a}
\mu^a_1$ there is a $\mu^a_2$ with $s_2 \trans{a} \mu^a_2$ and
$\mu^a_1 \dpreorder_R \mu^a_2$.

For $s_1 \in S_1$ and $s_2 \in S_2$, $s_2$ strongly simulates $s_1$, denoted $s_1
\preorder s_2$, iff there is a strong simulation $T$ such that $s_1 T s_2$.  
$L_2$ strongly simulates $L_1$, also denoted $L_1 \preorder L_2$, iff $s^0_1
\preorder s^0_2$. 
\end{definition}

When checking a specification $P$ of a
system $L$ with $\alpha_P \subset \alpha_L$,
we implicitly assume that $P$ is {\em completed} by adding Dirac self-loops on
each of the actions in $\alpha_L \setminus \alpha_P$ from every state before
checking $L \preorder P$. For example, $L \preorder P$ in Figure \ref{fig:io}
assuming that $P$ is completed with $\{\text{{\em send}}, \text{{\em ack}}\}$.
Checking $L_1 \preorder L_2$ is decidable in polynomial time
\cite{Baier_habilitation98,Zhang_thesis08} and can be performed with
a greatest fixed point algorithm that computes the coarsest simulation
between $L_1$ and $L_2$. The algorithm uses a relation variable $R$
initialized to $S_1 \times S_2$ and checks the condition in
Definition~\ref{def:strong_simulation} for every pair in $R$, iteratively,
removing any violating pairs from $R$. The
algorithm terminates when a fixed point is reached showing $L_1
\preorder L_2$ or when the pair of initial states is removed showing
$L_1 \not\preorder L_2$. If $n = \text{{\tt max}}(|S_1|,
|S_2|)$ and $m = \text{{\tt max}}(|\tau_1|, |\tau_2|)$,
the algorithm takes $O((mn^6 + m^2n^3)/\log
n)$ time and $O(mn + n^2)$ space \cite{Baier_habilitation98}. Several
optimizations exist \cite{Zhang_thesis08} but we do not consider them
here, for simplicity.

We do consider a specialized algorithm for the
case that $L_1$ is a tree which we use during abstraction refinement (Sections
\ref{sec:cegar} and \ref{sec:agar}). It initializes $R$ to $S_1 \times S_2$ and is based on a bottom-up traversal of
$L_1$.  Let $s_1 \in S_1$ be a non-leaf state during such a traversal
and let $s_1 \trans{a} \mu_1$. For every $s_2 \in S_2$,
the algorithm checks if there exists $s_2 \trans{a} \mu_2$ with $\mu_1 \dpreorder_R \mu_2$
and removes $(s_1,s_2)$ from $R$, otherwise, where $R$ is the current relation.
This constitutes an iteration in the algorithm. The algorithm
terminates when $(s^0_1,s^0_2)$ is removed from $R$ or when the traversal ends.
Correctness is not hard to show and we skip the proof.


\begin{lemma}[\cite{SL_nordic95}]
\label{lem:precongruence}
$\preorder$ is a preorder $(${\em \ie}reflexive and transitive$)$ and is
compositional, {\em \ie}if $L_1 \preorder L_2$ and $\alpha_2 \subseteq
\alpha_1$, then for every LPTS $L$,
$L_1 \parallel L \preorder L_2 \parallel L$.
\end{lemma}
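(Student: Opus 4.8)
The plan is to prove the two properties separately: first that $\preorder$ is a preorder, then that it is compositional. For reflexivity, the identity relation $\{(s,s) \mid s \in S_L\}$ on any LPTS $L$ is easily checked to be a strong simulation: if $s \trans{a} \mu$, take the same transition on the right, and observe that $\mu \dpreorder_{\mathrm{Id}} \mu$ via the weight function $w(s_1,s_2) = \mu(s_1)$ if $s_1 = s_2$ and $0$ otherwise. For transitivity, suppose $T_{12}$ witnesses $L_1 \preorder L_2$ and $T_{23}$ witnesses $L_2 \preorder L_3$; I would show that the relational composition $T_{12} \circ T_{23}$ is a strong simulation. Given $s_1 (T_{12}\circ T_{23}) s_3$, pick $s_2$ with $s_1 T_{12} s_2$ and $s_2 T_{23} s_3$; a transition $s_1 \trans{a} \mu_1$ is matched by $s_2 \trans{a} \mu_2$ with $\mu_1 \dpreorder_{T_{12}} \mu_2$, which in turn is matched by $s_3 \trans{a} \mu_3$ with $\mu_2 \dpreorder_{T_{23}} \mu_3$. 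The key sublemma is that $\dpreorder$ composes: if $\mu_1 \dpreorder_{R} \mu_2$ and $\mu_2 \dpreorder_{R'} \mu_3$ then $\mu_1 \dpreorder_{R \circ R'} \mu_3$. This follows by composing the two weight functions $w$ and $w'$: define $w''(s_1,s_3) = \sum_{s_2} w(s_1,s_2)\, w'(s_2,s_3) / \mu_2(s_2)$ (summing only over $s_2 \in \supp{\mu_2}$), and verify the three marginal/support conditions of Definition~\ref{def:weight_function_based} by a direct calculation using that both $w$ and $w'$ have $\mu_2$ as a marginal.

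For compositionality, assume $L_1 \preorder L_2$ via a strong simulation $T$, and $\alpha_2 \subseteq \alpha_1$; fix an arbitrary LPTS $L$. I would show that
\[
  R = \{\, ((s_1,s),(s_2,s)) \mid s_1 T s_2,\ s_1 \in S_1,\ s_2 \in S_2,\ s \in S_L \,\}
\]
is a strong simulation between $L_1 \parallel L$ and $L_2 \parallel L$, which gives the initial states $((s^0_1,s^0_L),(s^0_2,s^0_L)) \in R$ immediately. So take $((s_1,s),(s_2,s)) \in R$ and a transition $(s_1,s) \trans{a} \nu$ in $L_1 \parallel L$, and case-split on the three clauses of Definition~\ref{def:composition}. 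The interesting cases are when both sides move (clause~1) and when only the $L_1$-component moves ($a \notin \alpha_L$, clause~2); the case where only $L$ moves ($a \notin \alpha_1$, hence $a \notin \alpha_2$ by the alphabet hypothesis) is straightforward since $L_2$ idles on its Dirac distribution too. In clause~1, $\nu = \mu_1 \product \mu_L$ with $s_1 \trans{a}\mu_1$ and $s \trans{a}\mu_L$; since $s_1 T s_2$, there is $s_2 \trans{a}\mu_2$ with $\mu_1 \dpreorder_T \mu_2$, and then $(s_2,s) \trans{a} \mu_2 \product \mu_L$ is the matching transition. It remains to show $\mu_1 \product \mu_L \dpreorder_R \mu_2 \product \mu_L$: from a weight function $w$ for $\mu_1 \dpreorder_T \mu_2$, define $w'((s_1,s'),(s_2,s'')) = w(s_1,s_2)\cdot \mu_L(s')$ when $s' = s''$ and $0$ otherwise, and check the three conditions using the product structure. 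The case $a \notin \alpha_L$ is identical with $\mu_L$ replaced by $\dirac{s}$.

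The main obstacle is the algebra of weight functions — specifically the composition-of-weight-functions sublemma for transitivity (handling the $\supp{\mu_2}$ support carefully so the division is well-defined, and checking that positivity of $w''$ forces membership in $R \circ R'$), and the product-of-weight-functions construction for compositionality. Both are routine but fiddly; alternatively, one can sidestep them entirely by invoking Lemma~\ref{lem:image_based} and arguing at the level of the Hall-type condition $\mu_1(S) \le \mu_2(R(S))$, which makes transitivity $\mu_1(S) \le \mu_2(R(S)) \le \mu_3(R'(R(S)))$ almost immediate and makes the product case a short computation on $S \subseteq \supp{\mu_1 \product \mu_L}$ after noting $R(\cdot)$ factors through the first coordinate. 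I would use the image-based characterization to keep the proof short.
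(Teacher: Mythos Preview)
Your proposal is correct and, in the approach you ultimately commit to (the image-based characterization via Lemma~\ref{lem:image_based}), it matches the paper's proof essentially line for line: the paper proves transitivity by the chain $\mu_1(S) \le \mu_2(R_{12}(S)) \le \mu_3(R_{23}(R_{12}(S))) = \mu_3(R(S))$, and proves compositionality by the same relation $R$ you define, the same three-way case split, and the Hall-type inequality in each case (partitioning $X \subseteq \supp{\mu_1 \otimes \mu_L}$ by its $S_L$-coordinate in the synchronizing case). Your weight-function constructions are also correct alternatives, but the paper does not use them.
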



Finally, we show the {\em soundness} and {\em completeness} of the rule {\sc ASym}.
The rule is {\em sound} if the conclusion holds whenever there is an $A$ satisfying
the premises. And the rule is {\em complete} if there is an $A$ satisfying the
premises whenever the conclusion holds.

\begin{theorem}
\label{thm:sound_complete}
For $\alpha_A \subseteq \alpha_2$, the rule {\sc ASym} is sound and complete.
\end{theorem}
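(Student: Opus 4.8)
The plan is to prove soundness and completeness separately, both leveraging the properties of $\preorder$ established in Lemma~\ref{lem:precongruence}.

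For soundness, suppose $A$ satisfies premises $1$ and $2$, i.e. $L_1 \parallel A \preorder P$ and $L_2 \preorder A$. I would first apply compositionality (Lemma~\ref{lem:precongruence}) to premise $2$: since $L_2 \preorder A$ and we need the alphabet side-condition $\alpha_A \subseteq \alpha_2$ to invoke it with $L = L_1$, we get $L_2 \parallel L_1 \preorder A \parallel L_1$. Using commutativity of $\parallel$ (immediate from Definition~\ref{def:composition}, up to the obvious state-space bijection, which preserves $\preorder$ by reflexivity/transitivity arguments or is simply treated as identity), this reads $L_1 \parallel L_2 \preorder L_1 \parallel A$. Then by transitivity of $\preorder$ with premise $1$, $L_1 \parallel L_2 \preorder L_1 \parallel A \preorder P$, giving the conclusion. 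I should double-check that the alphabet of the intermediate terms lines up so that the "completion with Dirac self-loops" convention for specifications is respected — this is the one piece of bookkeeping to be careful about, and it is why the hypothesis $\alpha_A \subseteq \alpha_2$ is stated.

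For completeness, assume the conclusion $L_1 \parallel L_2 \preorder P$ holds; I would exhibit $A := L_2$ (more precisely $L_2^{\alpha_2}$, so the alphabet condition $\alpha_A \subseteq \alpha_2$ holds trivially with equality). Then premise $2$, $L_2 \preorder A$, is just reflexivity of $\preorder$ (Lemma~\ref{lem:precongruence}). Premise $1$, $L_1 \parallel A \preorder P$, is exactly the assumed conclusion since $A = L_2$. Hence the premises are satisfiable whenever the conclusion holds.

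The main obstacle — really the only subtlety — is the alphabet/completion bookkeeping: parallel composition takes the union of alphabets, and the convention that $P$ is implicitly completed over the ambient alphabet means one must verify that $L_1 \parallel A$, $L_1 \parallel L_2$, and $P$ are all being compared over a consistent alphabet so that Lemma~\ref{lem:precongruence}'s side-condition ($\alpha_2 \subseteq \alpha_1$ there, $\alpha_A \subseteq \alpha_2$ here) applies and transitivity is legitimate. Once the alphabets are pinned down, both directions are short chases through compositionality, transitivity, and reflexivity, with no probabilistic content beyond what Lemma~\ref{lem:precongruence} already packages.
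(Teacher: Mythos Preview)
Your proposal is correct and matches the paper's own (very terse) proof: soundness is obtained exactly as you describe by applying Lemma~\ref{lem:precongruence} to premise~2 and then transitivity with premise~1, and completeness is obtained by taking $A = L_2$. The alphabet bookkeeping you flag is indeed the only subtlety, and the hypothesis $\alpha_A \subseteq \alpha_2$ is there precisely so that the compositionality clause of Lemma~\ref{lem:precongruence} applies.
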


\begin{proof}
Soundness follows from Lemma \ref{lem:precongruence}. Completeness
follows trivially by replacing $A$ with $L_2$.  \qed
\end{proof}



\section{Counterexamples to Strong Simulation}
\label{sec:cex}
\vspace{-0.1in}
Let $L_1$ and $L_2$ be two LPTSes. We characterize a counterexample to
$L_1 \preorder L_2$ as a tree and show
that any simpler structure is not sufficient in general. We first
describe counterexamples via a simple language-theoretic
characterization.

\begin{definition}[Language of an LPTS]
Given an LPTS $L$, we define its language, denoted $\L(L)$, as the set
$\{L' | L' ~\text{is an LPTS and } L' \preorder L\}$.
\end{definition}

\begin{lemma}
$L_1 \preorder L_2$ iff $\L(L_1) \subseteq \L(L_2)$.
\end{lemma}

\begin{proof}
Necessity follows trivially from the transitivity of $\preorder$ and sufficiency
follows from the reflexivity of $\preorder$ which implies $L_1 \in \L(L_1)$.
\qed
\end{proof}

Thus, a counterexample $C$ can be defined as follows.

\begin{definition}[Counterexample]
\label{def:cex}
A counterexample to $L_1 \preorder L_2$ is an LPTS $C$ such that $C \in \L(L_1)
\setminus \L(L_2)$, i.e. $C \preorder L_1$ but $C \not\preorder L_2$.
\end{definition}

Now, $L_1$ itself is a trivial choice for $C$ but it does not give any more useful
information than what we had before checking the simulation.
Moreover, it is preferable to have $C$
with a special and simpler structure rather than a general LPTS as it helps in a more
efficient counterexample analysis, wherever it is
used. When the LPTSes are restricted to LTSes, a {\em tree-shaped} LTS is known
to be sufficient as a counterexample \cite{Chaki_thesis05}. Based on a similar
intuition, we show that a stochastic tree is sufficient as a counterexample in the probabilistic case.

\begin{theorem}
\label{thm:tree_cex}
If $L_1 \not\preorder L_2$, there is a tree which serves as a counterexample.
\end{theorem}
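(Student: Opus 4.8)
The plan is to show that if $L_1 \not\preorder L_2$, a counterexample tree can be extracted by "unfolding" $L_1$ guided by the reason for the failure of simulation. Recall from the greatest fixpoint characterization that $L_1 \not\preorder L_2$ means that, letting $R^*$ be the coarsest simulation between the state spaces, we have $(s^0_1, s^0_2) \notin R^*$. The central idea is to build a tree $C$ together with a labeling of its states by states of $L_1$, rooted at a copy of $s^0_1$, such that $C \preorder L_1$ holds by construction (the labeling itself witnesses the simulation, since we only ever unfold genuine transitions of $L_1$), while $C \not\preorder L_2$ because at each level we have recorded a transition of $L_1$ that cannot be matched appropriately by the corresponding states of $L_2$.

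First I would make precise what "cannot be matched" contributes. For a pair $(s_1, s_2)$ with $s_1 \not\preorder s_2$, there is a witnessing transition $s_1 \trans{a} \mu_1$ such that for \emph{every} $s_2 \trans{a} \mu_2$ we have $\mu_1 \not\dpreorder_{\preorder} \mu_2$; and by Lemma~\ref{lem:image_based}, each such failure is witnessed by a set $S \subseteq \supp{\mu_1}$ with $\mu_1(S) > \mu_2(\preorder(S))$, i.e. some probability mass in $\supp{\mu_1}$ lands on states not simulated by anything reachable in one $a$-step from $s_2$. The delicate point is that the witnessing $\mu_2$ (and hence the failing set $S$) varies with $s_2$, but since $L_2$ is finitely branching there are only finitely many candidate $\mu_2$, so we can take, for a \emph{fixed} $s_1 \trans{a} \mu_1$, the states in $\supp{\mu_1}$ that are "bad" relative to at least one of the relevant $s_2$'s — and more carefully, we recurse on \emph{all} of $\supp{\mu_1}$, attaching to each successor $s_1'$ with $\mu_1(s_1') > 0$ the subtree obtained by recursing from $s_1'$ paired with the set of $L_2$-states it needs to be compared against.

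Concretely I would define the construction recursively on pairs $(s_1, Q)$ where $s_1 \in S_1$ and $Q \subseteq S_2$ is the (finite) set of $L_2$-states currently required to simulate $s_1$ but failing to: start with $(s^0_1, \{s^0_2\})$. Since $s_1 \not\preorder q$ for each $q \in Q$, pick one action $a$ and one transition $s_1 \trans{a} \mu_1$ that witnesses failure for every $q \in Q$ simultaneously — this is possible because for each $q$ there is some witnessing transition, but we need a single one; here I would instead allow the tree to branch over the finitely many witnessing transitions of $s_1$, or observe that one can process the (finitely many) $q \in Q$ one at a time, producing a finite forest and then combining. Having fixed $s_1 \trans{a} \mu_1$, for each $q \in Q$ and each $q \trans{a} \mu_2$, Lemma~\ref{lem:image_based} gives a set $S_{q,\mu_2}$; the successors $s_1'$ of $\mu_1$ that we must still "chase" are those lying in some $\preorder$-problematic position, and for each such $s_1'$ we assign the set $Q'$ of $L_2$-states that are forced to be compared with $s_1'$ but cannot simulate it, then recurse on $(s_1', Q')$. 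The tree $C$ has a node for $(s^0_1, \{s^0_2\})$ with an $a$-transition to the distribution $\mu_1$ reindexed onto the child subtrees.

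The main obstacle — and where I would spend most of the care — is twofold: (i) arguing \textbf{termination}, i.e. that this unfolding is finite. This does not follow from finiteness of $L_1$ and $L_2$ alone, since $s_1$ could recur; the right measure is something like the number of \emph{iterations} the fixpoint algorithm takes to remove the relevant pairs from $R$, or equivalently a rank function on removed pairs $(s_1, s_2) \notin \preorder$ given by the first iteration at which the pair was deleted. Each recursive call strictly decreases the maximum such rank over the current set $Q$, because a pair is deleted at iteration $k$ precisely when its witnessing successor pairs were deleted at iterations $< k$. Since ranks are bounded by $|S_1|\cdot|S_2|$, recursion terminates. (ii) verifying the two simulation claims: $C \preorder L_1$ holds because the state-labeling $\{(\text{node labeled } s_1, s_1)\}$ is a strong simulation — at each node we took a \emph{real} transition $s_1 \trans{a}\mu_1$ of $L_1$ and the child distribution maps onto $\mu_1$ under the labeling, so $\dpreorder$ holds with the identity weight function; and $C \not\preorder L_2$ is shown by induction on the tree depth using Lemma~\ref{lem:image_based} at the root: any candidate $\mu_2$ for $s^0_2 \trans{a} \mu_2$ fails because the witnessing set $S$ lifts to a set of root-children of $C$ carrying too much probability mass, each such child being (inductively) not simulated by the corresponding $L_2$-states. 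I would close by noting $C$ is finite-state, finite-alphabet, and a genuine tree by construction, completing the proof. \qed
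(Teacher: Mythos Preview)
Your approach is essentially the paper's: induct on the order in which the greatest-fixpoint algorithm removes pairs from $R$, and for each removed pair take a witnessing transition of $L_1$ and attach, under each child, the inductively-built counterexample trees for the relevant successor pairs. The paper phrases it as ``one tree $T_{12}$ per removed pair $(s_1,s_2)$'' and then, under each child $s$, attaches one subtree per bad $L_2$-state $t\in U_s$; this is exactly your $(s_1,Q)$ packaging with $Q$ the current set of $L_2$-states to refute. Your verification of $C\preorder L_1$ via the labeling and of $C\not\preorder L_2$ via Lemma~\ref{lem:image_based} at the root is also what the paper does.

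There is one place where your write-up does not line up with itself and, as stated, breaks termination. You describe choosing a witness transition $s_1 \trans{a} \mu_1$ with $\mu_1 \not\dpreorder_{\preorder} \mu_2$ and a set $S$ with $\mu_1(S) > \mu_2(\preorder(S))$, i.e.\ witnesses taken relative to the \emph{final} relation $\preorder$. But your rank argument (``a pair is deleted at iteration $k$ precisely when its witnessing successor pairs were deleted at iterations $<k$'') is about the witness that \emph{caused} the deletion, which is relative to the \emph{intermediate} $R$ at that iteration. These need not coincide, and with arbitrary $\preorder$-witnesses the successor pairs can have rank $\geq k$. Concretely: add a self-loop $s_1 \trans{b} \dirac{s_1}$ matched by $q \trans{b} \dirac{q}$, with $s_1 \not\preorder q$ caused by some other action. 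Then $\dirac{s_1}\not\dpreorder_{\preorder}\dirac{q}$, so the $b$-transition is a legitimate $\preorder$-witness, yet recursing on it returns you to $(s_1,\{q\})$ forever; note that at the iteration where $(s_1,q)$ is actually removed, $(s_1,q)$ is still in $R$, so $b$ is \emph{not} an $R$-witness there. The fix---and this is exactly what the paper does in its inductive step---is to pick the witnessing transition and the set $S_1^\mu$ relative to the current $R$ at the moment the pair is removed; then all pairs in $S_1^\mu\times(\supp{\mu}\setminus R(S_1^\mu))$ are already out of $R$, hence of strictly smaller rank, and your induction goes through.
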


\begin{proof}
We only give a brief sketch of a constructive proof here. See Appendix for a detailed proof.
Counterexample generation is based on the
coarsest strong simulation computation from Section
\ref{sec:prelims}. By induction on the number of pairs not in the current relation
$R$, we show that there is a tree counterexample to $s_1 \preorder
s_2$ whenever $(s_1, s_2)$ is removed from $R$. We only consider the
inductive case here. The pair is removed because there is a transition
$s_1 \trans{a} \mu_1$ but for every $s_2 \trans{a} \mu$,
$\mu_1 \not\dpreorder_R \mu$ \ie there exists $S^\mu_1
\subseteq \supp{\mu_1}$ such that $\mu_1(S^\mu_1) >
\mu(R(S^\mu_1))$. Such an $S^\mu_1$ can be found using
Algorithm \ref{algo:dist_cex}. Now, no pair in $S^\mu_1 \times
(\supp{\mu} \setminus R(S^\mu_1))$ is in $R$. By induction
hypothesis, a counterexample tree exists for each such pair. A
counterexample to $s_1 \preorder s_2$ is built using $\mu_1$ and
all these other trees.  \qed
\end{proof}

\vspace{-0.2in}
\begin{algorithm}
\footnotesize
\caption{Finding $T \subseteq S_1$ such that $\mu_1(T) > \mu(R(T))$.}
\label{algo:dist_cex}
$\text{Given $\mu_1 \in \dist{S_1}$, $\mu \in \dist{S_2}$, $R
\subseteq S_1 \times S_2$ with $\mu_1 \not\dpreorder_R \mu$.}$
\begin{algorithmic}[1]
\STATE let $f$ be a maxflow function for the flow network corresponding to
$\mu_1$ and $\mu$
\STATE find $s_1 \in S_1$ with $\mu_1(s_1) > \sum_{s_2 \in S_2} f(s_1,s_2)$ and
let $T = \{s_1\}$
\WHILE {$\mu_1(T) \le \mu(R(T))$}
  \STATE $T \leftarrow \{s_1 \in S_1 | \exists s_2 \in R(T) : f(s_1,s_2) > 0\}$
\ENDWHILE
\RETURN $T$
\end{algorithmic}
\end{algorithm}
\vspace{-0.2in}

For an illustration, see Figure \ref{fig:cex_eg} where $C$ is a counterexample
to $L_1 \preorder L_2$. Algorithm \ref{algo:dist_cex} is also analogous to the one
used to find a subset failing Hall's condition in Graph Theory and can easily be
proved correct. We obtain the following complexity bounds whose proof can be found in Appendix.


\begin{figure}[t]
\centering
\includegraphics[scale=1.4]{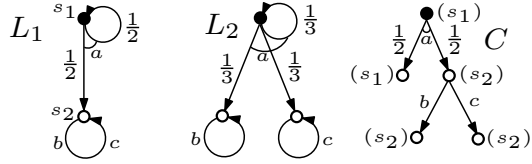}
\caption{$C$ is a counterexample to $L_1 \preorder L_2$.}
\label{fig:cex_eg}
\vspace{-0.2in}
\end{figure}

\begin{theorem}
\label{thm:cex_complexity}
Deciding $L_1 \preorder L_2$ and obtaining a tree counterexample
takes $O(mn^6 + m^2n^3)$ time and $O(mn + n^2)$ space where $n = \text{{\tt
max}}(|S_{L_1}|, |S_{L_2}|)$ and $m = \text{{\tt max}}(|\tau_1|, |\tau_2|)$.
\end{theorem}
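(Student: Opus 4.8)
The plan is to bound the running time and space of the tree-counterexample procedure sketched in the proof of Theorem~\ref{thm:tree_cex}, reusing the known complexity of the coarsest-simulation fixed-point algorithm from Section~\ref{sec:prelims}. First I would observe that deciding $L_1 \preorder L_2$ is exactly the greatest fixed point computation already cited, costing $O((mn^6 + m^2n^3)/\log n)$ time and $O(mn+n^2)$ space; since we additionally want a counterexample, I will show the extra bookkeeping does not increase these bounds beyond dropping the $\log n$ factor. The key is that the inductive construction in Theorem~\ref{thm:tree_cex} attaches to each removed pair $(s_1,s_2)$ a tree built from the transition $s_1 \trans{a}\mu_1$ together with the already-constructed trees for the finitely many pairs in $S^\mu_1 \times (\supp{\mu}\setminus R(S^\mu_1))$, each such pair having been removed strictly earlier.

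The steps, in order, would be: (i) recall the decision-procedure bounds and note that each iteration removes at least one pair, so there are $O(n^2)$ iterations, each doing $O(m)$ maxflow checks on networks of size $O(n)$ with maxflow cost $O(n^3/\log n)$ (or $O(n^3)$ without the refinement), matching the stated time; (ii) argue that whenever a pair is removed we can, at that moment, invoke Algorithm~\ref{algo:dist_cex} to extract a witness subset $S^\mu_1$ for each of the $O(m)$ failed distributions $\mu$ — Algorithm~\ref{algo:dist_cex} runs in time polynomial in $n$ (each \textbf{while} iteration strictly enlarges $T$, so at most $n$ iterations, each costing $O(n^2)$), which is dominated by the maxflow cost already counted; (iii) observe that the tree for $(s_1,s_2)$ is assembled in time linear in its size, and that its size is bounded by a sum over the earlier trees it references plus the size of $\mu_1$, so the total size of all trees constructed is polynomial — here I would be slightly careful and argue that we only ever build one tree per removed pair and we reference previously built trees by pointer, so no blow-up occurs and the whole collection fits within $O(mn+n^2)$ space, the same as the decision procedure; (iv) conclude by adding up: the dominant cost is the fixed-point computation itself, giving $O(mn^6 + m^2n^3)$ time (absorbing the $1/\log n$) and $O(mn+n^2)$ space.

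The main obstacle I expect is the space bound: a naive reading of the recursive tree construction could suggest that copying subtrees at each level leads to exponential blow-up in the counterexample size. The resolution is to note that the construction never duplicates a subtree — each removed pair gets exactly one tree, and the tree for a later-removed pair merely links to the (already stored) trees of earlier-removed pairs, so the shared DAG of trees has $O(n^2)$ nodes of "glue" plus the distributions involved, all accountable within $O(mn+n^2)$. If one insists on an explicit (unshared) tree as the final output, its size could in principle be larger, so I would phrase the space bound as the working space of the algorithm, consistent with how the decision procedure's space is stated, and remark that the output tree, when materialized, can be read off from this structure. The time bound is comparatively routine once one checks that Algorithm~\ref{algo:dist_cex} and the per-pair tree assembly are both dominated by the maxflow computations that the decision procedure already performs.
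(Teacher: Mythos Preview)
Your proposal is correct and follows essentially the same approach as the paper: bound Algorithm~\ref{algo:dist_cex} by $O(n^3)$ time (and small space), observe this is absorbed into the per-iteration maxflow cost of the fixed-point computation (dropping the $1/\log n$ refinement), and let the overall bound follow from the known analysis of the coarsest-simulation algorithm. The paper's own proof is a two-line sketch that omits the tree-storage discussion entirely; your pointer-sharing argument for the space bound is additional care the paper does not spell out, but it does not change the route.
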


Note that the obtained counterexample is essentially a finite {\em tree execution} of
$L_1$. That is, there is a total mapping $M : S_C \to S_1$ such that for every
transition $c \trans{a} \mu_c$ of $C$, there exists $M(c) \trans{a} \mu_1$ such that $M$ restricted
to $\supp{\mu_c}$ is an injection and for every $c' \in \supp{\mu_c}$,
$\mu_c(c') = \mu_1(M(c'))$. $M$ is also a strong simulation. We call
such a mapping an {\em execution mapping from $C$ to $L_1$}. Figure
\ref{fig:cex_eg} shows an execution mapping in brackets beside the states of $C$.
We therefore have the following corollary.

\begin{corollary}
\label{cor:reactive_tree_cex}
If $L_1$ is reactive and $L_1 \not\preorder L_2$, there is a reactive
tree which serves as a counterexample.
\end{corollary}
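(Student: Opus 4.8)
The plan is to leverage Theorem~\ref{thm:tree_cex} together with the execution mapping observation immediately preceding the corollary. Recall that Theorem~\ref{thm:tree_cex} already produces a tree counterexample $C$ to $L_1 \not\preorder L_2$, and the discussion after its proof shows that $C$ comes equipped with an execution mapping $M : S_C \to S_1$: for every transition $c \trans{a} \mu_c$ of $C$ there is a transition $M(c) \trans{a} \mu_1$ of $L_1$ with $M$ injective on $\supp{\mu_c}$ and $\mu_c(c') = \mu_1(M(c'))$ for all $c' \in \supp{\mu_c}$. So the whole task reduces to showing that when $L_1$ is reactive, this $C$ can be taken to be reactive as well, i.e.\ no state of $C$ has two transitions on the same action.

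First I would argue that the tree $C$ constructed in the proof of Theorem~\ref{thm:tree_cex} is already reactive whenever $L_1$ is. The key point is structural: in that construction, each node $c$ of $C$ that is created to witness the removal of a pair $(s_1, s_2)$ from $R$ is expanded using \emph{a single} transition $s_1 \trans{a} \mu_1$ of $L_1$ (the transition that caused the violation), and the children of $c$ are exactly the states in $\supp{\mu_c}$ where $\mu_c$ mirrors $\mu_1$ via the execution mapping. Thus $c$ has exactly one outgoing transition in $C$, on the single action $a$. Hence $C$ is in fact fully-probabilistic, and a fully-probabilistic LPTS is trivially reactive (at most one transition per state, hence at most one per state-action pair). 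If one prefers not to rely on the internal details of that construction, an alternative is to take \emph{any} tree counterexample $C$ with execution mapping $M$ and observe that if some $c \in S_C$ had two transitions $c \trans{a} \mu_c$ and $c \trans{a} \mu_c'$, then both would be matched (via $M$) to transitions $M(c) \trans{a} \mu_1$ and $M(c) \trans{a} \mu_1'$ of $L_1$; but $L_1$ reactive forces $\mu_1 = \mu_1'$, and then one can simply delete the redundant transition from $C$ (together with the now-unreachable subtree it induces) without affecting either $C \preorder L_1$ or $C \not\preorder L_2$ --- the pruned structure is still a tree, still a counterexample, and now has at most one transition on $a$ from $c$; iterating over all such $c$ and all repeated actions yields a reactive counterexample.

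For the "still a counterexample" claim after pruning, I would check the two conditions separately. That $C' \preorder L_1$ after deleting a transition is immediate: the restriction of $M$ to the pruned tree is still an execution mapping and hence still a strong simulation. That $C' \not\preorder L_2$ requires a little care, since removing behavior from $C$ could in principle make it easier to simulate. Here I would instead argue from the specific construction: the counterexample witnessing $s_1 \not\preorder s_2$ uses only \emph{one} transition of $s_1$ and the recursively-built subtrees for pairs in $S^\mu_1 \times (\supp{\mu} \setminus R(S^\mu_1))$, so no pruning is ever needed in the first place --- $C$ is born reactive. This is why I lean toward the first approach (inspect the construction) over the second (prune a posteriori).

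The main obstacle, and the only real content, is making precise that the construction in Theorem~\ref{thm:tree_cex} assigns exactly one transition to each internal node --- in other words, that the existence of the execution mapping $M$ with $M$ injective on supports, \emph{combined with} the branching of $C$ being inherited one-transition-at-a-time from $L_1$, forces $C$ to be fully-probabilistic. Since the theorem's proof is only sketched in the body (full details deferred to the appendix), the corollary's proof should state that the tree produced there is fully-probabilistic by inspection, and then conclude with the trivial implication "fully-probabilistic $\Rightarrow$ reactive." I would keep the written proof to two or three sentences: cite Theorem~\ref{thm:tree_cex} and the execution-mapping remark, note that each node of the constructed $C$ carries a single transition copied from $L_1$ so $C$ is fully-probabilistic, and hence reactive.
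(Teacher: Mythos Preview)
Your central claim---that the tree built in the proof of Theorem~\ref{thm:tree_cex} is fully-probabilistic---is false, and it is directly refuted by Lemma~\ref{lem:cex_fully_prob_insuff} in the very same paper: there exist reactive $R_1,R_2$ with $R_1\not\preorder R_2$ for which \emph{no} counterexample whatsoever is fully-probabilistic, so in particular the tree the construction produces for that instance cannot have a single transition per node. The mistake in your reasoning is the sentence ``each node $c$ of $C$ that is created to witness the removal of a pair $(s_1,s_2)$\ldots{} has exactly one outgoing transition.'' Look again at the inductive step in the appendix proof: for a child $s\in\supp{\mu_1}$ one forms $U_s=\bigcup\{S^\mu_2\mid s\in S^\mu_1\}$ and then \emph{for every} $t\in U_s$ attaches the counterexample tree for $(s,t)$ below $s$. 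When $|U_s|>1$ the node $s$ receives several root-transitions, one per $t$, and these may well be on the same action or on different actions. So nodes of $C$ can have multiple outgoing transitions; the tree is neither fully-probabilistic nor, a priori, reactive.

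Your fallback (delete one of two same-action transitions) does not work either, and you already sensed why: pruning behaviour from $C$ can only make $C\preorder L_2$ \emph{easier} to establish, so $C'\not\preorder L_2$ is not preserved. The correct repair is to \emph{merge} rather than prune. If $c$ has two transitions $c\trans{a}\mu_c$ and $c\trans{a}\mu_c'$ then, because $L_1$ is reactive, both arise via the execution mapping from the \emph{same} $L_1$-transition $M(c)\trans{a}\mu_1$, and $M$ restricted to either support is a bijection onto $\supp{\mu_1}$. Identify the two children mapping to each $s'\in\supp{\mu_1}$, giving a single transition $c\trans{a}\nu$ whose children carry the subtrees from \emph{both} originals. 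Iterating yields a reactive tree $C'$ (equivalently, the image of $C$ in the finite reactive tree-unfolding of $L_1$). One then has $C\preorder C'$ via the quotient map and $C'\preorder L_1$ via the induced execution mapping; from $C\preorder C'$ and $C\not\preorder L_2$ it follows that $C'\not\preorder L_2$. This merging step is what the paper is gesturing at with ``the obtained counterexample is essentially a finite tree execution of $L_1$\ldots{} We therefore have the following corollary,'' and it is the missing idea in your proposal.
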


The following two lemmas show that (reactive) trees are the simplest
structured counterexamples (proofs in Appendix).

\begin{lemma}
\label{lem:cex_fully_prob_insuff}
There exist reactive LPTSes $R_1$ and $R_2$ such that $R_1 \not\preorder R_2$
and no counterexample is fully-probabilistic.
\end{lemma}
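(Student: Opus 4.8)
The plan is to exhibit two small reactive LPTSes $R_1, R_2$ with $R_1 \not\preorder R_2$ for which \emph{no} fully-probabilistic LPTS can serve as a counterexample, i.e.\ every fully-probabilistic $C$ with $C \preorder R_1$ also satisfies $C \preorder R_2$. The key intuition is that a fully-probabilistic LPTS has at most one transition from each state, so from its start state it can ``exercise'' only one action; reactivity of $R_1$ and $R_2$ should then be used to make the failure of simulation depend essentially on $R_1$ being able to offer a \emph{choice} of two different actions from a single state, with each individual action-branch still being simulated by $R_2$. A fully-probabilistic $C$ can only witness one of the two action-branches at a time, and by the corollary that reactive trees suffice (Corollary~\ref{cor:reactive_tree_cex}) we may even assume $C$ is a tree, so its start state fires a single action $a$; the sub-behaviour reachable under $a$ is, by construction, simulated by $R_2$, hence $C \preorder R_2$.

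First I would fix the construction: let $R_1$ have start state $s$ with two transitions, $s \trans{a} \mu_a$ and $s \trans{b} \mu_b$ (legal since $a \neq b$, so $R_1$ is still reactive), where $\mu_a, \mu_b$ are chosen so that each is ``hard'' in the Figure~\ref{fig:dist_rel}(b) sense against $R_2$ but becomes simulable in isolation. Concretely, $R_2$ should have start state $t$ with $t \trans{a} \nu_a$ and $t \trans{b} \nu_b$ such that $\mu_a \dpreorder_{\preorder} \nu_a$ and $\mu_b \dpreorder_{\preorder} \nu_b$ each hold, but the combination forces, via the state-matching in Definition~\ref{def:strong_simulation}, some target state of $R_1$ that must simultaneously be related to incompatible targets of $R_2$ — so $s \not\preorder t$, giving $R_1 \not\preorder R_2$. (One natural way: reuse the gadget of Figure~\ref{fig:dist_rel}(b) so that $\mu_a$ needs $s_1$ paired with $t_1$ while $\mu_b$ needs $s_1$ paired with a state incompatible with $t_1$, and these two requirements cannot be met by any single strong simulation.) I would then verify $R_1 \not\preorder R_2$ directly from the definition, which is a finite check.

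Next, for the impossibility half, let $C$ be any fully-probabilistic LPTS with $C \preorder R_1$; I must show $C \preorder R_2$. Since $C$ is fully-probabilistic, its start state $c^0$ has at most one outgoing transition, say on action $a$ (if none, $C$ trivially simulates into anything, in particular into $R_2$ after completion). Because $C \preorder R_1$, the image of $C$'s behaviour lies within the $a$-branch of $R_1$ from $s$, i.e.\ one shows by a simple induction (using that simulation only ever needs to match the transitions $C$ actually has) that the witnessing strong simulation $C \preorder R_1$ restricted to reachable states never uses the $b$-transition of $s$. Then, since the $a$-branch of $R_1$ is simulated by (the $a$-branch of) $R_2$ — this being exactly the design condition $\mu_a \dpreorder_{\preorder} \nu_a$ extended down the tree — transitivity of $\preorder$ (Lemma~\ref{lem:precongruence}) yields $C \preorder R_2$. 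The same argument covers the symmetric case where $c^0$ fires $b$.

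The main obstacle I expect is getting the construction \emph{tight}: I need $R_1 \not\preorder R_2$ to fail \emph{only} because $R_1$ can branch on two actions at its start state, so that each single-action projection of $R_1$ \emph{is} simulated by $R_2$ (otherwise a one-action fully-probabilistic $C$ could still witness the failure). Balancing ``each branch alone is fine, together they are not'' requires the branch-incompatibility to come from the state-level matching constraint in Definition~\ref{def:strong_simulation}, not from a probability mismatch visible within a single branch — this is the delicate part, and the natural home for the Figure~\ref{fig:dist_rel}(b)-style splitting gadget. A secondary nuisance is handling the alphabet-completion convention for $C$ and making the ``the simulation never uses the $b$-transition'' induction fully rigorous; but both are routine once the gadget is pinned down.
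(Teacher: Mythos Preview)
Your proposal has a genuine gap at its core: the design conditions you impose on $R_1$ and $R_2$ are self-defeating. You require that $\mu_a \dpreorder_\preorder \nu_a$ and $\mu_b \dpreorder_\preorder \nu_b$ both hold, where $\preorder$ is the strong-simulation preorder between $R_1$ and $R_2$. But $\preorder$ is by definition the largest strong simulation, so these two facts say exactly that the pair $(s,t)$ satisfies the local condition of Definition~\ref{def:strong_simulation} with respect to $\preorder$ itself; then $\preorder \cup \{(s,t)\}$ is again a strong simulation, and by maximality $(s,t)\in\preorder$, i.e.\ $s \preorder t$, contradicting $R_1 \not\preorder R_2$. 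The ``incompatibility'' you envision between the $a$-matching and the $b$-matching cannot arise: the simulation $R$ is a relation, not a function, so it may contain both $(s_1,t_1)$ and $(s_1,t_2)$ simultaneously, and $\dpreorder_R$ is monotone in $R$, so different weight functions for different transitions coexist freely. Top-level branching on two actions therefore cannot break simulation while ``each branch alone is fine''.

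The mechanism the paper uses is different and essentially probabilistic. The two-action branching is placed one level \emph{below} the start, at a state $r_{11}$ reached with probability $\tfrac12$ under $\mu_{10}$, and the failure of $R_1 \preorder R_2$ is a Hall-condition violation (Lemma~\ref{lem:image_based}) between $\mu_{10}$ and $\mu_{20}$: because $r_{11}$ offers both $y$ and $z$, it is simulated by only one of three equiprobable states in $\supp{\mu_{20}}$, and $\tfrac12 > \tfrac13$. A fully-probabilistic $C$ fires at most one action per state, so any state of $C$ playing the role of $r_{11}$ offers only $y$ or only $z$ and is hence simulated by at least two of those three states; the bound $\tfrac12 \le \tfrac23$ then restores the Hall condition, and one exhibits a strong simulation witnessing $C \preorder R_2$ directly. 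It is this interplay between action-branching at a \emph{probabilistic successor} and the resulting weight mismatch that your plan misses. (A side remark: you invoke Corollary~\ref{cor:reactive_tree_cex} to assume $C$ is a tree, but that corollary asserts \emph{existence} of a tree counterexample; here you must rule out \emph{every} fully-probabilistic $C$, tree-shaped or not.)
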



Thus, if $L_1$ is reactive, a reactive tree 
is the simplest structure for a
counterexample to $L_1 \preorder L_2$.
This is surprising, since the non-probabilistic counterpart of a
fully-probabilistic LPTS is a trace of actions and it is known that trace inclusion coincides with
simulation conformance between reactive (\ie deterministic) LTSes.
If there is no such restriction on $L_1$, one may ask if a reactive LPTS suffices
as a counterexample to $L_1 \preorder L_2$. That is not the case either,
as the following lemma shows.

\begin{lemma}
\label{lem:cex_reactive_insuff}
There exist an LPTS $L$ and a reactive LPTS $R$ such that $L \not\preorder R$
and no counterexample is reactive.
\end{lemma}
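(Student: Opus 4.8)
The plan is to exhibit a small explicit counterexample and then argue that no reactive LPTS can witness the failure. First I would construct an LPTS $L$ and a reactive LPTS $R$ with $L \not\preorder R$ where the reason for the failure is inherently non-reactive on the side of $L$: the natural candidate is to let $L$ have a state with \emph{two} transitions on the same action $a$, say $s \trans{a} \mu$ and $s \trans{a} \nu$, with $\mu$ and $\nu$ ``incomparable'' in the sense that no single distribution out of any state of $R$ can dominate both under $\dpreorder$, yet each of $\mu$, $\nu$ individually can be matched. Concretely I would take $R$ reactive so that from its relevant state there is a unique transition on $a$ to some distribution $\rho$, and arrange (by a small three- or four-state gadget, reusing the style of Figure~\ref{fig:dist_rel}(b)) that $\mu \dpreorder_R \rho$ fails but via a witness subset that is ``covered'' only if $\rho$ splits one way, while $\nu \dpreorder_R \rho$ fails via a witness subset covered only if $\rho$ splits the other way --- so that there is no reactive structure simulated by $L$ that is not already simulated by $R$. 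Actually it is cleaner to make $L \not\preorder R$ outright (one of the two transitions, say on $\mu$, simply cannot be matched by $\rho$), and then focus the whole argument on: any counterexample $C$, i.e. $C \preorder L$ with $C \not\preorder R$, cannot be reactive.

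The core of the proof is the second claim. By Theorem~\ref{thm:tree_cex} and the remark following it (the execution mapping $M : S_C \to S_1$, here $S_1 = S_L$), every counterexample contains, after unfolding, a tree execution of $L$; the point of the gadget is that any execution of $L$ that exposes the mismatch with $R$ must, at some state, take \emph{both} the $a$-transition to $\mu$ and the $a$-transition to $\nu$ (because exposing the mismatch with $R$ requires probing $\mu$, but to be simulated \emph{by} $L$ while also being non-reactive-free one needs\ldots). Let me restate the intended mechanism more carefully: I would design $L$ so that $C \preorder L$ forces $C$'s relevant state to have, on action $a$, a transition to a distribution $\mu_C$ with $\mu_C \dpreorder_{M} \mu$ \emph{or} $\mu_C \dpreorder_M \nu$; and I would design $R$ and the witness subsets so that a $\mu_C$ with $\mu_C \dpreorder_M \mu$ is automatically $\dpreorder$-below $\rho$ (hence not a counterexample from that state), while a $\mu_C$ with $\mu_C \dpreorder_M \nu$ alone is \emph{also} $\dpreorder$-below $\rho$ --- but a $\mu_C$ that is below \emph{neither} is not below $L$ at all. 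Thus the only way for $C$ to be a counterexample is to branch: have \emph{two} transitions on $a$ from one state, one below $\mu$ and one below $\nu$, so that jointly they escape $\rho$. This makes $C$ non-reactive, proving the lemma.

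The concrete steps, in order, are: (i) fix the gadget --- I anticipate $L$ with a start state branching on $a$ to two distributions over two ``color'' states each, $R$ reactive with a start state going on $a$ to a single distribution over three states, chosen so that Lemma~\ref{lem:image_based}'s Hall-type condition fails for each branch via \emph{different} witness subsets; (ii) verify $L \not\preorder R$ directly from Definition~\ref{def:strong_simulation} by checking the $\dpreorder_R$ condition fails at the initial pair (a finite case check); (iii) characterize, for any LPTS $C$ with $C \preorder L$, the possible $a$-transitions out of the image-of-start state, using the execution-mapping observation and the fact that $\dpreorder$ is monotone in $R$; (iv) show that if $C$ is reactive, its unique $a$-transition $\mu_C$ satisfies $\mu_C \dpreorder_{\preorder} \mu$ or $\mu_C \dpreorder_{\preorder} \nu$, and in either case $\mu_C \dpreorder \rho'$ for the relevant distribution of $R$, so $C \preorder R$ after all --- contradiction; hence no counterexample is reactive.

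The main obstacle I expect is step (iii)--(iv): pinning down exactly what distributions a $C$-state can have under the constraint $C \preorder L$ and showing the two ``good'' cases really do collapse into $C \preorder R$. This requires the gadget to be engineered so that the witness subsets for $\mu \not\dpreorder_R \rho$ and $\nu \not\dpreorder_R \rho$ are genuinely ``orthogonal'' --- each plugged individually by a different splitting of $\rho$ --- while their union cannot be plugged by any single splitting. Getting a three- or four-state example small enough to verify by hand, yet with this orthogonality property, is the delicate part; I would likely borrow the ``diamond'' shape implicit in Figure~\ref{fig:dist_rel}(b) and duplicate it. Once the gadget is right, everything else is a routine finite check against Definitions~\ref{def:weight_function_based} and~\ref{def:strong_simulation} and Lemma~\ref{lem:image_based}.
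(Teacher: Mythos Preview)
Your high-level strategy matches the paper's: exhibit explicit $L$ and $R$, verify $L \not\preorder R$, then show that any reactive $C$ with $C \preorder L$ satisfies $C \preorder R$ by building an appropriate simulation. Where you diverge is in the gadget itself, and the one you describe in step~(i) does not work.

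You place the non-deterministic $a$-branching at the \emph{start state} of $L$ (two transitions $s^0_L\trans{a}\mu$ and $s^0_L\trans{a}\nu$), and give $R$'s start state a \emph{single} $a$-transition to $\rho$, with the successors being simple ``color'' leaves. For $L \not\preorder R$ you then need $\mu \not\dpreorder_\preorder \rho$ (or the same for $\nu$). But now take the reactive tree $T$ with root $\trans{a}\mu'$ where $\mu'$ is a copy of $\mu$ and the successors are copies of $L$'s (reactive) color leaves. Then $T \preorder L$ trivially, and the very witness subset you used for $\mu \not\dpreorder_\preorder \rho$ shows $\mu' \not\dpreorder_\preorder \rho$, hence $T \not\preorder R$. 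So $T$ is a \emph{reactive} counterexample, and your step~(iv) claim (``$\mu_C \dpreorder \mu$ implies $\mu_C \dpreorder \rho$'') is simply false for $\mu_C=\mu$. ``Orthogonal witness subsets'' do not help here: each branch alone already fails against the single $\rho$.

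The paper's construction avoids exactly this by pushing the non-determinism one level \emph{in}. There is first an $x$-transition to a distribution giving mass $1/2$ to a state $l_{11}$; it is $l_{11}$ that carries the two $y$-transitions $\mu_{110}$, $\mu_{111}$. On the $R$ side there are \emph{three} states $r_{21},r_{22},r_{23}$ at that level (each with a single $y$-transition), not one, with $\mu_{110}\dpreorder\mu_{21},\mu_{22}$ and $\mu_{111}\dpreorder\mu_{22},\mu_{23}$ only. Thus $l_{11}$ (needing both) is simulated only by $r_{22}$, and $1/2 > 1/3$ gives $L\not\preorder R$; but any \emph{reactive} state playing the r\^ole of $l_{11}$ has a single $y$-transition, hence is simulated by \emph{two} of the $r_{2i}$'s, and $1/2 \le 2/3$ lets the outer Hall condition go through. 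The mechanism you need is this outer probabilistic wrapper together with \emph{multiple} reactive targets in $R$, not a single $\rho$.
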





\section{CEGAR for Checking Strong Simulation}
\label{sec:cegar}
\vspace{-0.1in} Now that the notion of a counterexample has been
formalized, we describe a CounterExample Guided Abstraction Refinement
(CEGAR) approach \cite{CGJ+_cav00} to check $L \preorder P$ where $L$
and $P$ are LPTSes and $P$ stands for a {\em specification} of $L$. We
will use this approach to describe AGAR in the next section.

Abstractions for $L$ are obtained using a quotient construction from a
{\em partition} $\Pi$ of $S_L$.  We let $\Pi$ also denote the
corresponding set of equivalence classes and given an arbitrary
$s \in S$, let $[s]_\Pi$ denote the equivalence class containing
$s$. The quotient is an adaptation of the usual construction in the
non-probabilistic case.


\begin{definition}[Quotient LPTS]
\label{def:quotient_lpts}
Given a partition $\Pi$ of $S_L$, define the {\em quotient LPTS}, denoted
$L/\Pi$, as the LPTS $\langle \Pi,[s^0_L]_\Pi,\alpha_L,\tau \rangle$ where
$(c,a,\mu_l) \in \tau$ iff $(s,a,\mu) \in \tau_L$ for some $s \in S_L$
with $s \in c$ and $\mu_l(c') = \sum_{t \in c'} \mu(t)$ for all $c' \in \Pi$.
\end{definition}

As the abstractions are built from an explicit representation of $L$,
this is not immediately useful. But, as we will see in Sections
\ref{sec:agar} and \ref{sec:results}, this becomes very useful when adapted
to the assume-guarantee setting.

\begin{figure}[t]
\begin{minipage}[c]{0.5\linewidth}
\centering
\includegraphics[scale=1.5]{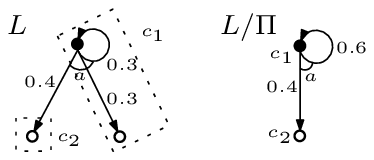}
\caption{An LPTS $L$, partition $\Pi = \{c_1,c_2\}$ and the quotient $L/\Pi$.}
\label{fig:quotient}
\end{minipage}
\hspace{0.1in}
\begin{minipage}[c]{0.5\linewidth}
\centering
\includegraphics[scale=1.8]{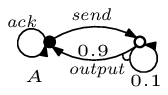}
\caption{An assumption for $L_1$, $L_2$ and $P$\newline  in Figure \ref{fig:io}.}
\label{fig:assumption}
\end{minipage}
\vspace{-0.2in}
\end{figure}

Figure \ref{fig:quotient} shows an example quotient.
Note that $L \preorder L/\Pi$ for any partition $\Pi$ of $S_L$ (proof in
Appendix), with the relation $R = \{(s,c) | s \in c, c \in \Pi\}$ as a strong simulation.

\begin{algorithm}
\footnotesize
\caption{CEGAR for LPTSes: checks $L\preorder P$}
\label{algo:cegar}
\begin{algorithmic}[1]
\STATE $A \leftarrow L/\Pi$, where $\Pi$ is the coarsest partition of $S_L$
\WHILE {$A \not\preorder P$}
  \STATE obtain a counterexample $C$
  \STATE $(spurious,A')\leftarrow \text{{\em analyzeAndRefine}}(C,A,L)$
\COMMENT{{\em see text}}
  \IF {$spurious$}
    \STATE $A \leftarrow A'$
  \ELSE
    \RETURN {counterexample $C$}
  \ENDIF
\ENDWHILE
\RETURN $L\preorder P$ holds
\end{algorithmic}
\end{algorithm}

CEGAR for LPTSes is sketched in
Algorithm~\ref{algo:cegar}.
It maintains an abstraction $A$ of $L$, initialized to
the quotient for the coarsest partition, and iteratively
refines 
$A$ based on the counterexamples obtained
from the simulation check against $P$ until a partition whose
corresponding quotient conforms to $P$ w.r.t. $\preorder$ is obtained, or a real
counterexample is found. In the following, we describe how to analyze
if a counterexample is {\em spurious}, due to abstraction, 
and how to refine the abstraction in case it is (lines
$4-6$).  Our analysis is an adaptation of an existing one for
counterexamples which are arbitrary {\em sub-structures}
of $A$ \cite{CV_tocl10}; while our tree counterexamples have an execution
mapping to $A$, they are not necessarily sub-structures of $A$.

\vspace{0.1in}
\noindent{\bf Analysis and Refinement ({\em analyzeAndRefine}$(\cdot)$).}
Assume that $\Pi$ is a partition of $S_L$ such that $A=L/\Pi$ and
$A\not\preorder P$. Let $C$ be a tree counterexample obtained by the
algorithm described in Section \ref{sec:cex}, \ie $C \preorder A$ but
$C \not\preorder P$. As described in Section \ref{sec:cex}, there is an {\em
execution mapping} $M : S_C \to S_A$ which is also
a strong simulation. Let $R_M \subseteq S_C \times S_L$ be $\{(s_1,s_2) | s_1 M [s_2]_\Pi\}$.
Our refinement strategy tries to
obtain the coarsest strong simulation between $C$ and $L$ contained in
$R_M$, using the specialized algorithm for trees described
in Section \ref{sec:prelims} with $R_M$ as the initial candidate.
Let $R$ and $R_\old$ be the candidate relations at the end of the
current and the previous iterations, respectively, and let
$s_1 \trans{a} \mu_1$ be the transition in $C$ considered by the
algorithm in the current iteration. ($R_\old$ is undefined initially.)  The strategy
refines a state when one of the following two cases happens before termination
and otherwise, returns $C$ as a {\em real} counterexample. 

\begin{enumerate}

\item $R(s_1) = \emptyset$. There are two possible reasons for this case. One is
that the states in $\supp{\mu_1}$ are not related, by $R$, to enough number
of states in $S_L$ (\ie $\mu_1$ is {\em spurious}) and (the images under $M$
of) all the states in $\supp{\mu_1}$ are candidates for refinement. The other
possible reason is the branching (more than one transition) from $s_1$ where no
state in $R_M(s_1)$ can {\em simulate all} the transitions of $s_1$ and $M(s_1)$
is a candidate for refinement.




\item $M(s_1) = [s^0_L]_\Pi$, $s^0_L \in R_\old(s_1) \setminus R(s_1)$ and
$R(s_1) \neq \emptyset$, \ie $M(s_1)$ is the initial state of $A$ but $s_1$ is no
longer related to $s^0_L$ by $R$. Here, $M(s_1)$ is a candidate for refinement.

\end{enumerate}

In case $1$, our refinement strategy first tries to split the equivalence class $M(s_1)$ into
$R_\old(s_1)$ and the rest and then, for every state $s \in \supp{\mu_1}$, tries
to split the equivalence class $M(s)$ into $R_\old(s)$ and the rest, unless
$M(s) = M(s_1)$ and $M(s_1)$ has already been split.
And in case $2$, the strategy splits the equivalence class $M(s_1)$ into
$R_\old(s_1) \setminus R(s_1)$ and the rest. It follows from the two cases that
if $C$ is declared real, then $C \preorder L$ with the final $R$ as a strong simulation between $C$ and
$L$ and hence, $C$ is a counterexample to $L \preorder P$.
The following lemma (proof in Appendix) shows that the refinement strategy always leads to progress.

\begin{lemma}
\label{lem:ref_progress}
The above refinement strategy always results in a strictly finer partition $\Pi'
< \Pi$.
\end{lemma}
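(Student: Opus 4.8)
\textbf{Proof proposal for Lemma~\ref{lem:ref_progress}.}
The plan is to argue that in each of the two cases that trigger a refinement, the equivalence class of $\Pi$ that gets split is split into two \emph{nonempty} parts, so that the resulting partition $\Pi'$ is strictly finer than $\Pi$. Concretely, I would fix a refinement step, let $s_1 \trans{a} \mu_1$ be the transition of $C$ under consideration, and let $R$, $R_\old$ be the candidate relations at the end of the current and previous iterations. Since the specialized tree algorithm only ever removes pairs from $R_M$, we always have $R \subseteq R_\old \subseteq R_M$, and for each state $c$ of $C$ the set $R_M(c)$ is of the form $\bigcup_{c' \in M(c)} c'$, i.e.\ it is exactly the union of the $\Pi$-classes picked out by $M(c)$ — here, crucially, $M(c)$ is a \emph{single} class of $\Pi$, so $R_M(c)$ is a single class of $\Pi$ viewed as a subset of $S_L$. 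This is the bookkeeping fact that makes the ``split a class into $R_\old(s_1)$ and the rest'' operations well-defined.

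Next I would handle the two cases in turn. In case~2 the recipe splits $M(s_1) = [s^0_L]_\Pi$ into $R_\old(s_1)\setminus R(s_1)$ and its complement within $M(s_1)$. The hypothesis of case~2 gives $s^0_L \in R_\old(s_1)\setminus R(s_1)$, so the first part is nonempty; and $R(s_1)\neq\emptyset$ together with $R(s_1)\subseteq R_\old(s_1)\subseteq M(s_1)$ (as a subset of $S_L$) shows the complement part is nonempty as well — so this is a genuine split and $\Pi' < \Pi$. In case~1 the recipe first attempts to split $M(s_1)$ into $R_\old(s_1)$ and the rest; this is a proper split precisely when $\emptyset \neq R_\old(s_1) \subsetneq M(s_1)$. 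The argument that $R_\old(s_1)$ is a \emph{strict} subset of the class $M(s_1)$ is where the work is: I would show that $R_M(s_1)$ (the full class $M(s_1)$) does not satisfy the simulation condition for $s_1$ under $R_\old$ — this follows because case~1 says $R(s_1) = \emptyset$, i.e.\ every state in $R_M(s_1)$ eventually fails, hence in particular at least one state of $R_M(s_1)$ was already removed by the time $R_\old$ was formed (or is removed in this very step, which is what triggers the split), giving $R_\old(s_1)\subsetneq M(s_1)$. If instead $R_\old(s_1) = \emptyset$, the recipe falls through to splitting $M(s)$ for states $s\in\supp{\mu_1}$: here I would use that $\mu_1$ being spurious means some state $t\in\supp{\mu_1}$ has $R_\old(t)$ a proper nonempty subset of its class $M(t)$ (nonempty because $C\preorder A$ guarantees $\mu_1 \dpreorder_M \mu$ for the corresponding $A$-transition, so the $M$-images are populated along $\mu_1$; proper because the failure of $\mu_1 \not\dpreorder_{R_\old}$ on a witness set forces some state's $R_\old$-image to have shrunk below its class), yielding a proper split of $M(t)$.

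The main obstacle I expect is the second half of case~1: carefully pinning down \emph{which} class actually gets properly split when the naive split of $M(s_1)$ into $R_\old(s_1)$ and the rest is vacuous (either $R_\old(s_1)$ empty or all of $M(s_1)$). I would resolve this by exploiting the witness set produced by Lemma~\ref{lem:image_based}/Algorithm~\ref{algo:dist_cex}: when $R(s_1)=\emptyset$ for the ``not enough related states'' reason, for every $s_2 \in R_M(s_1)$ the corresponding $A$-transition $\mu$ has $\mu_1 \not\dpreorder_{R} \mu$, so there is $T\subseteq\supp{\mu_1}$ with $\mu_1(T) > \mu(R(T))$ while $\mu_1(T)\le \mu(M(T))$ (the latter from $C\preorder A$); comparing these two inequalities shows $R(T)$ is strictly contained in $M(T)$, hence some state $t\in T$ has $R_\old(t)\subsetneq M(t)$, and that class $M(t)$ is the one the recipe properly splits. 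For the ``branching'' sub-reason the split of $M(s_1)$ itself is the proper one, since $M(s_1)$ simulates all transitions of $s_1$ (as $M$ is a simulation) but no state of $R_M(s_1)$ — after removing the failing states — does, so $R_\old(s_1)$ is a strict nonempty subset. In every subcase we exhibit a class of $\Pi$ split into two nonempty parts, so $\Pi' < \Pi$, completing the proof. \qed
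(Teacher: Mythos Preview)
Your treatment of case~2 is correct and matches the paper. The gaps are all in case~1. First, the clean dichotomy (which the paper uses) is simply $R_\old(s_1) = R_M(s_1)$ versus $R_\old(s_1) \subsetneq R_M(s_1)$; your attempt to separate the informal ``branching'' and ``spurious $\mu_1$'' reasons does not align with when the split of $M(s_1)$ is vacuous, and your second-paragraph claim that ``at least one state of $R_M(s_1)$ was already removed by the time $R_\old$ was formed'' is simply false in general --- all removals from $R_M(s_1)$ can happen in the current iteration, so $R_\old(s_1) = R_M(s_1)$ is perfectly possible. Second, the case $R_\old(s_1) = \emptyset$ you spend effort on cannot arise: refinement triggers at the \emph{first} iteration in which case~1 or case~2 holds, so $R(s_1)$ never became empty at an earlier transition of $s_1$. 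This same first-occurrence observation is also the correct reason why $R_\old(t) \neq \emptyset$ for every $t \in \supp{\mu_1}$ (children are fully processed before $s_1$ in the bottom-up traversal without case~1 triggering); your justification via ``$C \preorder A$, so the $M$-images are populated'' only gives $R_M(t) \neq \emptyset$, which is trivial.

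Third, your witness argument for the vacuous subcase has a type error. You write ``the corresponding $A$-transition $\mu$'' and then $\mu_1(T) > \mu(R(T))$, but $R(T) \subseteq S_L$ whereas an $A$-distribution lives on $\Pi$; nor is there an ``$A$-transition for every $s_2 \in R_M(s_1)$'' --- there is one transition $M(s_1)\trans{a}\mu_A$ in $A$ matched by $\mu_1$ under $M$. The correct objects are the particular $s_2^\ast \in M(s_1)$ and $L$-transition $s_2^\ast \trans{a} \mu^\ast$ that induce $\mu_A$ via the quotient. From the quotient one gets $\mu_1 \dpreorder_{R_M} \mu^\ast$, while $(s_1,s_2^\ast)$ being removed in the current step gives $\mu_1 \not\dpreorder_{R_\old} \mu^\ast$; comparing the two Hall-type inequalities on a witness $T \subseteq \supp{\mu_1}$ then yields some $t \in T$ with $R_\old(t) \subsetneq R_M(t)$. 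This is precisely the contradiction the paper alludes to, and once the types are fixed your idea is exactly this; together with the nonemptiness of $R_\old(t)$ from the first-occurrence observation, the split of $M(t)$ is then genuinely proper.
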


\section{Assume-Guarantee Abstraction Refinement}
\label{sec:agar}
\vspace{-0.1in} 
We now describe our approach to Assume-Guarantee
Abstraction Refinement (AGAR) for LPTSes.  The approach is similar to CEGAR
from the previous section with the notable
exception that counterexample analysis is performed in an assume
guarantee style: a counterexample obtained from checking one component
is used to refine the abstraction of a different component.

Given LPTSes $L_1$, $L_2$
and $P$, the goal is to check $L_1 \parallel L_2 \preorder P$ in an
assume-guarantee style, using rule {\sc ASym}.
The basic idea is to maintain $A$ in the rule as an abstraction of
$L_2$, \ie the second premise holds for free throughout, and to
check only the first premise for every $A$ generated by the algorithm.
As in CEGAR, we restrict $A$ to the quotient for a partition of $S_2$.
If the first premise holds for an $A$, then $L_1 \parallel L_2\preorder P$ also
holds, by the soundness of
the rule. Otherwise, the obtained counterexample $C$ is analyzed to
see whether it indicates a real error or it is spurious, in which case $A$
is refined (as described in detail below). Algorithm \ref{algo:agar}
sketches the AGAR loop. 

For an example, $A$ in Figure~\ref{fig:assumption} shows the
final assumption generated by AGAR for the LPTSes in Figure
\ref{fig:io} (after one refinement). 

%

\vspace{-0.2in}
\begin{algorithm}
\footnotesize
\caption{AGAR for LPTSes: checks $L_1 \parallel L_2\preorder P$}
\label{algo:agar}
\begin{algorithmic}[1]
\STATE $A \leftarrow$ coarsest abstraction of $L_2$
\WHILE {$L_1 \parallel A \not\preorder P$}
  \STATE obtain a counterexample $C$
  \STATE obtain projections $C \project{L_1}$ and $C \project{A}$
  \STATE $(\text{\em spurious}, A') \leftarrow \text{{\em analyzeAndRefine}}(C \project{A},A,L_2)$
  \IF {{\em spurious}}
    \STATE $A \leftarrow A'$
  \ELSE
    \RETURN {counterexample $C$}
  \ENDIF
\ENDWHILE
\RETURN $L_1 \parallel L_2\preorder P$ holds
\end{algorithmic}
\end{algorithm}
\vspace{-0.2in}


\noindent{\bf Analysis and Refinement.}
The counterexample analysis is performed compositionally, using the {\em
  projections} of $C$ onto $L_1$ and $A$. 
As there is an {\em execution mapping} from $C$ to
$L_1 \parallel A$, these projections are the {\em
  contributions} of $L_1$ and $A$ towards $C$ in the composition.  We
denote these projections by $C \project{L_1}$ and $C \project{A}$,
respectively.
In the non-probabilistic case, these are obtained by simply projecting $C$ onto
the respective alphabets.
In the probabilistic scenario, however, composition
changes the probabilities in the distributions (Definition
\ref{def:product}) and alphabet projection is insufficient.
For this reason, we additionally record the
individual distributions of the LPTSes responsible for a product
distribution while performing the composition. Thus, projections
$C \project{L_1}$ and $C \project{A}$ can be obtained using this
auxiliary information.
Note that there is a natural {\em execution mapping} from $C \project{A}$ to
$A$ and from $C \project{L_1}$ to $L_1$. We can then employ the analysis
described in Section~\ref{sec:cegar} between $C \project{A}$ and $A$, \ie
invoke $\text{{\em analyzeAndRefine}}(C \project{A},A,L_2)$ to determine if $C
\project{A}$ (and hence, $C$) is spurious and to refine $A$ in case it is.
Otherwise, $C \project{A} \preorder L_2$ and hence, $(C \project{A})^{\alpha_2}
\preorder L_2$. Together with $(C \project{L_1})^{\alpha_1} \preorder L_1$ this
implies $(C \project{L_1})^{\alpha_1} \parallel (C \project{A})^{\alpha_2}
\preorder L_1 \parallel L_2$ (Lemma \ref{lem:precongruence}). As $C \preorder
(C \project{L_1})^{\alpha_1} \parallel (C \project{A})^{\alpha_2}$, $C$ is then
a {\em real} counterexample. Thus, we have the following result.

\begin{theorem}[Correctness and Termination]
Algorithm AGAR always terminates with at most
$|S_2| - 1$ refinements and $L_1 \parallel L_2
\not\preorder P$ if and only if the algorithm returns a real counterexample.
\end{theorem}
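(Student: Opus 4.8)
The plan is to prove the theorem in two parts: correctness (the "if and only if") and termination (the refinement bound), reusing the CEGAR machinery from Section~\ref{sec:cegar} as a black box.

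\textbf{Correctness.} First I would argue the two directions. For the ``only if'' direction, suppose the algorithm returns a real counterexample $C$ at line~9. By construction $C$ was obtained from a failed check $L_1 \parallel A \not\preorder P$, so $C \preorder L_1 \parallel A$ and $C \not\preorder P$. The preceding call to \emph{analyzeAndRefine}$(C\project{A}, A, L_2)$ returned $\neg\emph{spurious}$, which — invoking the correctness of the CEGAR analysis (the text just before Lemma~\ref{lem:ref_progress}) — means $C\project{A} \preorder L_2$, hence $(C\project{A})^{\alpha_2} \preorder L_2$. Combined with $(C\project{L_1})^{\alpha_1} \preorder L_1$ (which holds because the projection has a natural execution mapping into $L_1$), compositionality of $\preorder$ (Lemma~\ref{lem:precongruence}) gives $(C\project{L_1})^{\alpha_1} \parallel (C\project{A})^{\alpha_2} \preorder L_1 \parallel L_2$. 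Since $C \preorder (C\project{L_1})^{\alpha_1} \parallel (C\project{A})^{\alpha_2}$ and $C \not\preorder P$, transitivity of $\preorder$ yields $L_1 \parallel L_2 \not\preorder P$. For the ``if'' direction, suppose $L_1 \parallel L_2 \not\preorder P$. If the algorithm instead returned ``$L_1 \parallel L_2 \preorder P$ holds'' at line~12, then it exited the loop with some $A$ satisfying $L_1 \parallel A \preorder P$; but $A$ is always an abstraction of $L_2$ (invariant: $L_2 \preorder A^{\alpha_2}$, preserved by the quotient construction and by \emph{analyzeAndRefine}), so compositionality gives $L_1 \parallel L_2 \preorder L_1 \parallel A \preorder P$, a contradiction. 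Hence the algorithm must return a real counterexample — provided it terminates, which is the content of the second part.

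\textbf{Termination.} The key invariant is that after each spurious iteration, \emph{analyzeAndRefine} produces a strictly finer partition of $S_2$ (Lemma~\ref{lem:ref_progress}, applied with $L_2$ in place of $L$). A strictly decreasing chain of partitions of the finite set $S_2$ has length at most $|S_2| - 1$ before reaching the discrete partition; at the discrete partition $A = L_2/\Pi$ is isomorphic to $L_2$ itself (up to the extra actions, i.e. $A^{\alpha_2} \cong L_2$), so the first premise $L_1 \parallel A \preorder P$ is equivalent to $L_1 \parallel L_2 \preorder P$ and no counterexample to it can be spurious — the analysis must then declare any counterexample real. Therefore there are at most $|S_2|-1$ refinements, and since each iteration of the loop (the simulation check, counterexample generation, projection, and analysis) terminates by the results of Sections~\ref{sec:prelims}--\ref{sec:cegar}, the whole algorithm terminates.

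\textbf{Anticipated obstacle.} The delicate point is making the CEGAR analysis of Section~\ref{sec:cegar} apply verbatim to the projected counterexample $C\project{A}$ rather than to a counterexample of the monolithic check. I would need to verify carefully that $C\project{A}$ is a genuine tree with an execution mapping into $A$ (so that the specialized tree-simulation algorithm and the two refinement cases make sense), and that $C\project{A} \not\preorder P$-type reasoning is not what is used — rather, the analysis only ever relates $C\project{A}$ to $A$ and to $L_2$, and spuriousness is judged by whether $C\project{A}\preorder L_2$. Establishing that the auxiliary bookkeeping during composition (recording which component distributions produced each product distribution) indeed yields a well-defined $C\project{A}$ with the required execution mapping is the main thing to get right; once that is in place, Lemma~\ref{lem:ref_progress} and the finiteness argument close the proof. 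I would also need the abstraction invariant $L_2 \preorder A^{\alpha_2}$ to be explicitly maintained — it holds initially because $A$ is the coarsest quotient of $L_2$ (using $L \preorder L/\Pi$), and it is preserved by refinement since refining only splits classes, keeping $A$ a quotient of $L_2$.
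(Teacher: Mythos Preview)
Your proposal is correct and follows essentially the same approach as the paper: correctness via soundness of {\sc ASym} for the line-12 exit and the projection-and-compose argument (which the paper proves in the paragraph immediately preceding the theorem and simply cites as ``see above'') for the line-9 exit; termination via Lemma~\ref{lem:ref_progress} and the observation that the chain of strictly finer partitions of $S_2$ bottoms out at $A = L_2$, where completeness of {\sc ASym} applies. The only difference is that you spell out explicitly what the paper leaves to cross-references, and your ``anticipated obstacle'' about $C\project{A}$ being a tree with an execution mapping into $A$ is exactly the point the paper handles in the Analysis and Refinement paragraph of Section~\ref{sec:agar}.
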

\begin{proof}
{\em Correctness}: AGAR terminates when either Premise 1 is satisfied by the
current assumption (line $12$) or when a counterexample is returned (line $9$).
In the first case, we know that Premise 2 holds by construction and since {\sc
ASym} is sound (Theorem \ref{thm:sound_complete}) it follows that indeed $L_1
\|L_2 \preorder P$. In the second case, the counterexample returned by AGAR is
real (see above) showing that $L_1 \parallel L_2 \not\preorder P$.

{\em Termination}: AGAR iteratively refines the abstraction until the property
holds or a real counterexample is reported. Abstraction
refinement results in a finer partition (Lemma~\ref{lem:ref_progress}) and thus
it is guaranteed to terminate since in the worst case $A$ converges to $L_2$
which is finite state. Since rule {\sc ASym} is trivially complete for $L_2$
(proof of Theorem \ref{thm:sound_complete}) it follows that AGAR will also
terminate, and the number of refinements is bounded by $|S_2|-1$.
\qed
\end{proof}

In practice, we expect AGAR to terminate earlier than in $|S_2|-1$ steps, with an assumption smaller than $L_2$. AGAR will terminate as soon as it finds an assumption that satisfies the premises or that helps exhibit a real counterexample.
Note also that, although AGAR uses an explicit representation for the individual
components, it never builds $L_1 \parallel L_2$ directly (except in the
worst-case) keeping the cost of verification low.

\vspace{0.1in}
\noindent{\bf Reasoning with $n \ge 2$ Components.}
So far, we have discussed compositional verification in the context of
two components $L_1$ and $L_2$. This  reasoning
can be generalized to $n \ge 2$ components using the
following (sound and complete) rule.
\vspace{-0.01in}
 \begin{mathpar}
  \inferrule*[right=$(${\sc ASym-N}$)$]
            {1: L_1 \parallel A_1 \preorder P \\ 2: L_2 \parallel A_2 \preorder A_1 \\ ... \\ n: L_n \preorder A_{n-1}}
            {\parallel_{i=1}^n L_i \preorder P}
  \end{mathpar}
\vspace{-0.2in}


The rule enables us to overcome the {\em intermediate state explosion} that
may be associated with two-way decompositions (when the subsystems are
larger than the entire system). The AGAR algorithm for this rule
involves the creation of $n-1$ nested instances of AGAR for two components,
with the $i$th instance computing the assumption $A_i$ for
$(L_1 \parallel \dots \parallel L_i) \parallel (L_{i+1} \parallel A_{i+1})
\preorder P$.
When the AGAR instance for $A_{i-1}$
returns a counterexample $C$, for $1 < i \le n-1$, we need to analyze $C$
for spuriousness and refine $A_i$ in case it is.
From Algorithm \ref{algo:agar}, $C$ is returned only if {\em
analyzeAndRefine}$(C \project{A_{i-1}},A_{i-1},L_i \parallel A_i)$
concludes that $C \project{A_{i-1}}$ is real (note that $A_{i-1}$ is an
abstraction of $L_i \parallel A_i$). From {\em
analyzeAndRefine} in Section \ref{sec:cegar}, this implies that the final
relation $R$ computed between the states of $C
\project{A_{i-1}}$ and $L_i \parallel A_i$ is a
strong simulation between them. It follows that, although $C \project{A_{i-1}}$ does not
have an {\em execution mapping} to $L_i \parallel A_i$, we can naturally obtain a tree $T$
using $C \project{A_{i-1}}$, via $R$, with such a mapping. Thus, we modify the
algorithm to return $T \project{A_i}$ at line $9$, instead of $C$, which can then be used to check for spuriousness and
refine $A_i$. Note that when $A_i$ is refined, all the $A_j$'s for $j<i$ need to
be recomputed.



\vspace{0.1in}
\noindent{\bf Compositional Verification of Logical Properties.}
AGAR can be further applied to automate assume-guarantee checking of properties 
$\phi$ written as formulae in a logic that is
preserved by strong simulation
such as the
{\em weak-safety} fragment of probabilistic CTL (pCTL)
\cite{CV_tocl10} 
which also yield trees as counterexamples.
The rule {\sc ASym} is both sound and complete for this logic ($\models$ denotes
property satisfaction) for $\alpha_A \subseteq \alpha_2$ with a proof similar to
that of Theorem \ref{thm:sound_complete}.

\vspace{-0.15in}

  \begin{mathpar}
  \inferrule
            {1: L_1 \parallel A \models \phi \\ 2: L_2 \preorder A}
            {L_1 \parallel L_2 \models \phi}
  \end{mathpar}
\vspace{-0.1in}

$A$ can be computed as a conservative abstraction of $L_2$ and
iteratively refined based on the tree counterexamples to premise $1$, using
the same procedures as before. 
The rule can be generalized to reasoning about $n \ge 2$ components as
described above and also to richer logics with more general counterexamples adapting
existing CEGAR approaches \cite{CV_tocl10} to AGAR.
We plan to further investigate this direction in the future.
%


\section{Implementation and Results}
\label{sec:results}
\noindent{\bf Implementation.} 
We implemented the algorithms for checking simulation (Section
\ref{sec:prelims}), for generating counterexamples (as in the proof of Lemma
\ref{thm:tree_cex}) and for AGAR (Algorithm \ref{algo:agar}) with {\sc ASym} and
{\sc ASym-N} in \java.  We used the front-end of PRISM's
\cite{KNP_cav11} explicit-state engine to parse the models of the components described in
PRISM's input language and construct LPTSes which were then handled by
our implementation.

While the \java implementation for checking simulation uses the
greatest fixed point computation to obtain the coarsest strong
simulation, we noticed that the problem of checking the existence of a
strong simulation is essentially a constraint satisfaction problem. To
leverage the efficient constraint solvers that exist today, we reduced
the problem of checking simulation to an SMT problem with rational
linear arithmetic as follows. For every pair of states, the constraint
that the pair is in some strong simulation is simply the encoding of
the condition in Definition \ref{def:strong_simulation}. For a
relevant pair of distributions $\mu_1$ and $\mu_2$, the constraint
for $\mu_1 \dpreorder_R \mu_2$ is encoded by means of a weight
function (as given by Definition \ref{def:weight_function_based}) and
the constraint for $\mu_1 \not\dpreorder_R \mu_2$ is encoded by means
of a witness subset of $\supp{\mu_1}$ (as in Lemma
\ref{lem:image_based}), where $R$ is the variable for the strong
simulation.  We use Yices (v$1.0.29$) \cite{yices} to solve
the resulting SMT problem; a
{\em real} variable in Yices input language is essentially a
rational variable.  There is no direct
way to obtain a tree counterexample when the SMT problem is
unsatisfiable. Therefore when the conformance fails, we obtain the
{\em unsat core} from Yices, construct the {\em sub-structure} of
$L_1$ (when we check $L_1 \preorder L_2$) from the constraints in the
unsat core and check the conformance of this sub-structure against
$L_2$ using the \java implementation.  This sub-structure is usually
much smaller than $L_1$ and contains only the information necessary to
expose the counterexample.


\vspace{0.1in}
\noindent{\bf Results.}
We evaluated our algorithms using this implementation on several
examples analyzed in previous work \cite{FKP_fase11}.
Some of these examples were created
by introducing probabilistic failures into non-probabilistic models
used earlier \cite{PGB+_fmsd08} while others were adapted from PRISM
benchmarks \cite{KNP_cav11}. The properties used previously
were about {\em probabilistic reachability} and we had to create our
own specification LPTSes after developing an understanding of the
models. 
The models in all the examples satisfy
the respective specifications. We briefly describe the models and the
specifications below, all of which are available at
{\small \url{http://www.cs.cmu.edu/~akomurav/publications/agar/AGAR.html}}.

\begin{description}
\item{{\em CS$_1$ and CS$_N$}} model a {\em Client-Server}
  protocol with mutual exclusion having probabilistic
  failures in one or all of the $N$ clients, respectively. The specifications describe the probabilistic failure behavior of the clients while hiding some of
the actions as is typical in a high level design specification.

\item{{\em MER}} models a {\em resource arbiter} module of NASA's
  software for {\em Mars Exploration Rovers} which grants and rescinds
  shared resources for several users. We considered the case of two
  resources with varying number of users and probabilistic failures
  introduced in all the components. As in the above example, the specifications describe the
probabilistic failure behavior of the users while hiding some of the actions.

\item{{\em SN}} models a wireless {\em Sensor Network} of one or more
  sensors sending data and messages to a process via a channel with a
  bounded buffer having probabilistic behavior in the components. Creating
specification LPTSes for this example turned out to be more difficult
than the above examples, and we obtained them by observing the system's runs and by manual abstraction.

\end{description}

\begin{table*}[t]
\scriptsize
\centering
\begin{tabular}{|c|r|r|r|r|r|r|r|r|r|r|r|r|r|r|r|}
\hline
{\em Example} & & & \multicolumn{6}{c|}{\sc ASym} &
\multicolumn{5}{c|}{\sc ASym-N} & \multicolumn{2}{c|}{\sc Mono} \\
\cline{4-16}
$(${\em param}$)$ & $|L|$ & $|P|$ & $|L_1|$ &
$|L_2|$ & {\em Time} & {\em Mem} & $|L_M|$ & $|A_M|$ & $|L_c|$ & {\em Time} & {\em Mem} &
$|L_M|$ & $|A_M|$ & {\em Time} & {\em Mem}\\
\hline

{\em CS$_1 (5)$} & ${\bf 94}$ & $16$ & $36$ & $405$ & $7.2$ & $15.6$ & $182$ &
$33$ & $36$ & $74.0$ & $15.1$ & $182$ & $34$ & ${\bf 0.2}$ & ${\bf 8.8}$\\
{\em CS$_1 (6)$} & ${\bf 136}$ & $19$ & $49$ & $1215$ & $11.6$ & $22.7$ & $324$ &
$41$ & $49$ & $810.7$ & $21.4$ & $324$ & $40$ & ${\bf 0.5}$ & ${\bf 12.2}$\\
{\em CS$_1 (7)$} & ${\bf 186}$ & $22$ & $64$ & $3645$ & $37.7$ & $49.4$ & $538$ &
$56$ & $64$ & {\em out} & -- & -- & -- & ${\bf 0.8}$ & ${\bf 17.9}$\\

\hline

{\em CS$_N (2)$} & ${\bf 34}$ & $15$ & $25$ & $9$ & $0.7$ & $7.1$ & $51$ &
$7$ & $9$ & $2.4$ & $6.8$ & $40$ & $25$ & ${\bf 0.1}$ & ${\bf 5.9}$\\
{\em CS$_N (3)$} & ${\bf 184}$ & $54$ & $125$ & $16$ & $43.0$ & $63.0$ & $324$ &
$12$ & $16$ & $1.6k$ & $109.6$ & $372$ & $125$ & ${\bf 14.8}$ & ${\bf 37.9}$\\
{\em CS$_N (4)$} & ${\bf 960}$ & $189$ & $625$ & $25$ & {\em out} & -- & -- &
-- & $25$ & {\em out} & -- & -- & -- & ${\bf 1.8}k$ & ${\bf 667.5}$\\

\hline

{\em MER $(3)$} & $16k$ & $12$ & $278$ & $1728$ & ${\bf 2.6}$ & $19.7$ & ${\bf
706}$ & $7$ & $278$ & $3.6$ & ${\bf 14.6}$ & ${\bf 706}$ & $7$ & $193.8$ & $458.5$\\
{\em MER $(4)$} & $120k$ & $15$ & $465$ & $21k$ & ${\bf 15.0}$ & $53.9$ &
${\bf 2}k$ & $11$ & $465$ & $34.7$ & ${\bf 37.8}$ & ${\bf 2}k$ & $11$ & {\em
out} & --\\
{\em MER $(5)$} & $841k$ & $18$ & $700$ & $250k$ & -- & {\em out}\footnotemark[1] &
-- & -- & $700$ & ${\bf 257.8}$ & ${\bf 65.5}$
& ${\bf 3.3}k$ & $16$ & -- & {\em out}\footnotemark[1]\\

\hline

{\em SN $(1)$} & $462$ & $18$ & $43$ & $32$ & ${\bf 0.2}$ & ${\bf 6.2}$ &
${\bf 43}$ & $3$ & $126$ & $1.7$ & $8.5$ & $165$ & $6$ & $1.5$ & $27.7$\\
{\em SN $(2)$} & $7860$ & $54$ & $796$ & $32$ & ${\bf 79.5}$ & ${\bf
112.9}$ & ${\bf 796}$ & $3$ & $252$ & $694.4$ & $171.7$ & $1.4k$ &
$21$ & $4.7k$ & $1.3k$\\
{\em SN $(3)$} & $78k$ & $162$ & $7545$ & $32$ & {\em out} & -- & -- & -- &
$378$ & ${\bf 7.2}k$ & ${\bf 528.8}$ & ${\bf 1.4}k$ & $21$ & -- & {\em out}\\

\hline

\end{tabular}
\caption{AGAR vs monolithic verification. $^1$ Mem-out during model construction.}
\label{tab:results}
\vspace{-0.35in}
\end{table*}

Table \ref{tab:results} shows the results we obtained when {\sc ASym}
and {\sc ASym-N} were compared with monolithic (non-compositional)
conformance checking.  $|X|$ stands for the number of states of an
LPTS $X$. $L$ stands for the whole system, $P$ for the specification, $L_M$ for
the LPTS with the largest number of states built by composing LPTSes during the
course of AGAR, $A_M$ for the assumption with the largest
number of states during the execution and $L_c$ for the component with the
largest number of states in {\sc ASym-N}. {\em Time} is in
seconds and {\em Memory} is in megabytes. We also compared $|L_M|$ with $|L|$,
as $|L_M|$ denotes the largest LPTS ever built by AGAR. Best figures, among {\sc
ASym}, {\sc ASym-N} and {\sc Mono}, for {\em Time}, {\em Memory} and LPTS sizes, are boldfaced.
All the results were taken on a Fedora-10 64-bit machine running on an
Intel\textregistered ~Core$^\text{TM}$2 Quad CPU of $2.83$GHz and
$4$GB RAM. We imposed a $2$GB upper bound on Java heap memory and a
$2$ hour upper bound on the running time. We observed that most of the
time during AGAR was spent in checking the
premises and an insignificant amount was spent for the composition and
the refinement steps. Also, most of the memory was consumed by
Yices. 
We tried several orderings of the
components (the $L_i$'s in the rules) and report only the ones giving
the best results.

While monolithic checking outperformed AGAR for {\em
  Client-Server}, there are significant time and memory
savings for {\em MER} and {\em Sensor Network} where in
some cases the monolithic approach ran out of resources (time or
memory). One possible reason for AGAR performing worse for {\em
Client-Server} is that $|L|$ is much smaller than $|L_1|$ or $|L_2|$. 
When compared to using {\sc ASym}, \mbox{{\sc ASym-N}}
brings further memory savings in the case of {\em MER} and also time
savings for {\em Sensor Network} with parameter $3$ which could not
finish in $2$ hours when used with {\sc ASym}.
As already mentioned, these models were analyzed previously with an 
assume-guarantee framework using learning from traces \cite{FKP_fase11}.
Although that approach uses a similar assume-guarantee rule (but instantiated to check {\em probabilistic reachability}) and the results have some similarity (e.g. {\em Client-Server} is similarly not handled well by the compositional approach), we 
can not directly compare it with AGAR as it considers
a different class of properties.

\section{Conclusion and Future Work}
We described a complete, fully automated abstraction-refinement approach for 
assume-guarantee checking of strong simulation between LPTSes.
The approach uses refinement based on counterexamples formalized as stochastic
trees and it further applies to checking {\em safe}-pCTL properties. 
We showed experimentally the merits of the proposed technique. 
We plan to extend our approach to cases where the assumption $A$ has a smaller
alphabet than that of the component it represents as this can
potentially lead to further savings. Strong
simulation would no longer work and one would need to use
{\em weak} simulation \cite{SL_nordic95}, for which checking algorithms are
unknown yet.
We would also like
to explore symbolic implementations of our algorithms, for increased scalability. 
As an alternative approach, we plan to build upon our recent work \cite{KPC_lics12} on
learning LPTSes to develop practical compositional algorithms and compare with AGAR.





\subsubsection*{Acknowledgments.}
We thank
Christel Baier,
Rohit Chadha,
Lu Feng,
Holger Hermanns,
Marta Kwiatkowska,
Joel Ouaknine,
David Parker,
Frits Vaandrager,
Mahesh Viswanathan,
James Worrell and
Lijun Zhang
for generously answering our questions related to
this research. We also
thank the anonymous reviewers for their suggestions and
David Henriques
for
carefully reading an earlier draft.

\newpage
\appendix

\section{Proof of Lemma \ref{lem:precongruence}}
We first show that $\preorder$ is a preorder. Reflexivity can be easily proved by
showing that the identity relation is a strong simulation. We only consider
transitivity. Let $L_1 \preorder L_2$ and $L_2 \preorder L_3$. Thus, there are
strong simulations $R_{12} \subseteq S_1 \times S_2$ and $R_{23} \subseteq S_2
\times S_3$. Consider the relation $R = \{(s_1,s_3) | \exists s_2 : s_1 R_{12}
s_2 ~\text{and}~ s_2 R_{23} s_3\}$. Let $s_1 R s_3$ and $s_1 \trans{a}
\mu_1$. Also, let $s_2 \in S_2$ be such that $s_1 R_{12} s_2$ and $s_2 R_{23}
s_3$. As $R_{12}$ is a strong simulation, there exists $s_2 \trans{a} \mu_2$
with $\mu_1 \dpreorder_{R_{12}} \mu_2$. Again, as $R_{23}$ is a strong
simulation, there exists $s_3 \trans{a} \mu_3$ with $\mu_2
\dpreorder_{R_{23}} \mu_3$. Now, let $S \subseteq \supp{\mu_1}$ be
arbitrary. We have $\mu_1(S) \leq \mu_2(R_{12}(S)) \leq \mu_3(R_{23}(R_{12}(S))) =
\mu_3(R(S))$ (Lemma \ref{lem:image_based}). Thus, $\mu_1 \dpreorder_R \mu_3$ and hence, $R$ is a
strong simulation. Also, $s^0_1 R s^0_3$ by definition of $R$. We conclude that $L_1 \preorder L_3$.

Now, we show that $\preorder$ is compositional. Assume $L_1 \preorder L_2$ with
$\alpha_2 \subseteq \alpha_1$. Let $R_{12} \subseteq S_1 \times S_2$ be a strong
simulation. Consider the relation $R$ defined below.
\[
R = \{((s_1,s),(s_2,s)) | s_1 R_{12} s_2 ~\text{and}~ s \in S_L\}
\]

Let $(s_1,s) R (s_2,s)$ and $(s_1,s) \trans{a}
\mu_a$. So, $s_1 R_{12} s_2$. By Definition \ref{def:composition}, there
are three cases to analyze.

\begin{description}
\item{$s_1 \trans{a} \mu_1$, $s \trans{a} \mu$ and $\mu_a = \mu_1
\product \mu$ :} As $R_{12}$ is a strong simulation, there exists $s_2
\trans{a} \mu_2$ with $\mu_1 \dpreorder_{R_{12}} \mu_2$. And by
Definition \ref{def:composition}, $(s_2,s) \trans{a} \mu'_a$ where $\mu'_a =
\mu_2 \product \mu$. Now, let $X \subseteq \supp{\mu_a}$. For each $s \in
S_L$, let $X_s \subseteq X$ contain all the pairs of $X$ with $s$ as the second
member. Thus, the $X_s$'s partition $X$. We have $\mu_a(X)$
  \begin{align*}
  & = \sum_{s \in S_L} \mu_a(X_s \times \{s\})\\
  & = \sum_{s \in S_L} \mu_1(X_s) \cdot \mu(s) & \text{definition of
$\mu_a$}\\
  & \leq \sum_{s \in S_L} \mu_2(R_{12}(X_s)) \cdot \mu(s) & \text{$R_{12}$
is a strong simulation}\\
  & = \sum_{s \in S_L} \mu'_a(R_{12}(X_s) \times \{s\}) & \text{definition of
$\mu'_a$}\\
  & = \sum_{s \in S_L} \mu'_a(R(X_s \times \{s\})) & \text{definition of
$R$}\\
  & = \mu'_a(\bigcup_{s \in S_L} R(X_s \times \{s\})) & \text{the sets $R(X_s \times
\{s\})$ are disjoint for distinct $s$}\\
  & = \mu'_a(R(\bigcup_{s \in S_L} X_s \times \{s\})) &\\
  & = \mu'_a(R(X))
  \end{align*}
which implies that $\mu_a \dpreorder_R \mu'_a$.

\item{$a \not\in \alpha_1$, $s \trans{a} \mu$ and $\mu_a = \dirac{s_1}
\product \mu$ :} As $\alpha_2 \subseteq \alpha_1$, $a \not\in \alpha_2$ and
by Definition \ref{def:composition}, $(s_2,s) \trans{a} \mu'_a$ with
$\mu'_a = \dirac{s_2} \product \mu$. Now, let $X \subseteq
\supp{\mu_a}$ and let $X_2$ denote the set of all the second members of the
pairs in $X$. We have $\mu_a(X) = \mu(X_2) = \mu'_a(\{s_2\} \times X_2)
\leq \mu'_a(R(X))$ and hence,
$\mu_a \dpreorder_R \mu'_a$.

\item{$s_1 \trans{a} \mu_1$, $a \not\in \alpha_L$ and $\mu_a = \mu_1
\product \dirac{s}$ : } As $R_{12}$ is a strong simulation, there exists $s_2
\trans{a} \mu_2$ with $\mu_1 \dpreorder_{R_{12}} \mu_2$. Now, let $X
\subseteq \supp{\mu_a}$ and let $X_1$ denote the set of all the first members
of the pairs in $X$. We have $\mu_a(X) = \mu_1(X_1) \leq
\mu_2(R_{12}(X_1)) = \mu'_a(R(X))$ and hence, $\mu_a \dpreorder_R
\mu'_a$.
\end{description}

Hence, $R$ is a strong simulation. Also, $(s^0_1,s^0_L) R
(s^0_2,s^0_L)$ by definition of $R$. We conclude that $L_1 \parallel L \preorder
L_2 \parallel L$.
\qed

\section{Proof of Theorem \ref{thm:tree_cex}}

We give a constructive proof. Assume that $L_1 \not\preorder L_2$.

We first describe, briefly, a well-known algorithm used to check $L_1
\preorder L_2$ \cite{Baier_habilitation98}. We start with a candidate $R$ for the coarsest strong simulation
between $L_1$ and $L_2$ initialized to $S_1 \times S_2$. Each iteration, an
arbitrary pair $(s_1,s_2)$ in the current $R$ is picked and the local conditions
in the definition of a strong simulation (Definition
\ref{def:strong_simulation}) are checked for $R$. If the pair
fails, that is because there is a transition $s_1 \trans{a} \mu_1$ but for
every $s_2 \trans{a} \mu_2$, $\mu_1 \not\dpreorder_R
\mu_2$. In this case, the pair is removed and another iteration begins. Note
that, at this point we can conclude that $s_1 \not\preorder s_2$.
Otherwise, a new pair is picked for examination. The algorithm stops when
$(s^0_1,s^0_2)$ (the pair of the initial states) is removed from the current
$R$ at which point we conclude that $L_1 \not\preorder L_2$, or when a fixed
point is reached and we conclude that $L_1 \preorder L_2$. By the correctness
and termination of this algorithm, this will eventually happen. And by the
assumption made above that $L_1 \not\preorder L_2$, we are only interested in
the former scenario of termination.

We show that whenever a pair $(s_1,s_2)$ is removed from $R$, there is a
tree $T_{12}$ which serves as a counterexample to $s_1
\preorder s_2$. As argued above, $(s^0_1,s^0_2)$ is eventually removed
from $R$ and hence, we have a tree $T$ which serves a
counterexample to $s^0_1 \preorder s^0_2$ and therefore, to the conformance.
We proceed by strong induction on the number of pairs
removed so far from the initial $R = S_1 \times S_2$.


The base case is when no pair has been removed so far. In this case, $(s_1,s_2)$
will be removed only because there is a transition $s_1 \trans{a} \mu_1$ and
there is no transition on action $a$ from $s_2$.
Then, a counterexample will simply be the tree $T_{12}$ representing
the transition $s_1 \trans{a} \mu_1$. It is easy to see that $T_{12} \preorder
(L_1,s_1)$ but $T_{12} \not\preorder (L_2,s_2)$.

For the inductive case, assume that a new pair $(s_1,s_2)$ has been removed from
the current $R$. We have to analyze two cases. The first case is when we have
a transition $s_1 \trans{a} \mu_1$ but there is no transition $s_2 \trans{a} \mu_2$. This
is similar to the base case above. So, we will only consider the other case
below.

Now, there is a transition $s_1 \trans{a} \mu_1$ and the set $\Delta = \{\mu \in
\dist{S_2} | s_2 \trans{a} \mu\}$ is non-empty but for every $\mu \in \Delta$,
$\mu_1 \not\dpreorder_R \mu$. Consider an arbitrary $\mu \in \Delta$.
Because $\mu_1 \not\dpreorder_R \mu$, we conclude that there is a set $S^\mu_1
\subseteq \supp{\mu_1}$ such that $\mu_1(S^\mu_1) > \mu(R(S^\mu_1))$
(Lemma \ref{lem:image_based}). Intuitively, this is because
$S^\mu_1$ is {\em not related to enough number of states} from
$\supp{\mu}$. Let $S^\mu_2 = \supp{\mu} \setminus R(S^\mu_1)$.


We start building a tree $T_{12}$ with $s_1$ as the root and $s_1
\trans{a} \mu_1$ as the only outgoing transition. Now, let $s \in \bigcup_{\mu
\in \Delta} S^\mu_1$. Consider the set $U_s =
\bigcup \{S^\mu_2 | s \in S^\mu_1\}$. Then, for every $t \in U_s$, we simply
attach the counterexample tree for $(s,t)$ (exists by
induction hypothesis) below the state $s$ in
$T_{12}$. We claim that $T_{12}$ built this way is a counterexample to
$s_1 \preorder s_2$.

First of all, it is easy to see that $T_{12} \preorder (L_1,s_1)$ as $T_{12}$ is
obtained from the states and the corresponding distributions of $L_1$. Let $\mu \in
\Delta$ and let $R'$ be a strong simulation between $T_{12}$ and $L_2$.
By construction, $S^\mu_1 \subseteq \supp{\mu_1}$ and further, by induction
hypothesis for every $(s,t) \in S^\mu_1 \times S^\mu_2$, $(T_{12},s)
\not\preorder (L_2,t)$ and hence, $(s,t) \not\in R'$. Therefore
$\mu_1(S^\mu_1) > \mu(R(S^\mu_1)) \ge
\mu(R'(S^\mu_1))$. It follows that $\mu_1 \not\dpreorder_{R'} \mu$ and hence, $(s_1,s_2)
\not\in R'$. As $\mu$ and $R'$ are arbitrary, we
conclude that $T_{12} \not\preorder (L_2,s_2)$.
\qed

\section{Proof of Theorem \ref{thm:cex_complexity}}
It can be easily be seen that Algorithm \ref{algo:dist_cex} takes $O(n^3)$ time and
$O(n)$ space which increases the complexity of checking $\mu_1 \dpreorder_R
\mu_2$ to $O(n^3)$ time and $O(n^2)$ space (see Section \ref{sec:prelims}).
The rest of the argument is similar to that of the fixed point algorithm for computing
the coarsest strong simulation \cite{Baier_habilitation98}.
\qed

\section {Proof of Lemma \ref{lem:cex_fully_prob_insuff}}

\begin{figure}
\centering
\includegraphics[scale=1.2]{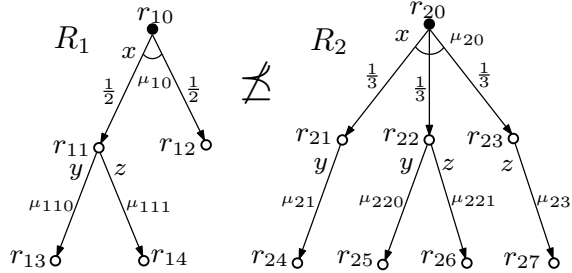}
\caption{An example where there is no fully-probabilistic counterexample.}
\label{fig:cex_fully_prob_insuff}
\end{figure}

Consider the two reactive LPTSes $R_1$ and $R_2$ in Figure
\ref{fig:cex_fully_prob_insuff}. The
states along with the outgoing actions and distributions are labeled as in the figure.
Clearly $r_{11} \not\preorder r_{21}$ and $r_{11} \not\preorder r_{23}$. It follows
that $\mu_{10} \not\dpreorder_\preorder \mu_{20}$ and hence, $R_1 \not\preorder R_2$. We are
interested in a counterexample to demonstrate this.

Let us assume that there is a fully-probabilistic LPTS $C$ (with initial state $c_0$) which
serves as a counterexample. Thus, $C \preorder R_1$ but $C \not\preorder R_2$. By Definition
\ref{def:strong_simulation} there exists a strong simulation $U$ such that $(c_0, r_{10}) \in
U$. If $c_0$ has no outgoing transitions, clearly $C \preorder R_2$. So,
it must have an outgoing distribution, say $\mu_0$. As $(c_0,r_{10}) \in
U$ and as $\mu_{10}$ is labeled by $x$, $\mu_0$ must be labeled by
$x$ too. Let $c_1$ be an arbitrary state in $\supp{\mu_0}$ with an outgoing
transition (there may be no such $c_1$). Then, the transition must be labeled by $y$ or $z$. Otherwise, clearly
$(c_1,r_{11}) \not\in U$ and $(c_1,r_{12}) \not\in U$ which imply $\mu_0
\not\dpreorder_{U} \mu_{10}$ and hence, $(c_0,r_{10}) \not\in U$
contradicting the assumption. Moreover, $(c_1,r_{12}) \not\in U$ as $r_{12}$
has no transitions. This forces $(c_1,r_{11})$ to be in $U$. Let the (only) outgoing
distribution $\mu_1$ of $c_1$ be labeled by $y$. Then, for every state $c_2 \in \supp{\mu_1}$,
$(c_2,r_{13}) \in U$ for otherwise $\mu_1 \not\dpreorder_{U}
\mu_{110}$ which implies $(c_1,r_{11}) \not\in U$ leading to a contradiction.
This forces $c_2$ to not have any transitions. We have the same
conclusion if $\mu_1$ is labeled by $z$ instead.

Thus, $C$ can only be a tree with exactly one transition
$\mu_0$ labeled by $x$ from the initial state and for every state in the
support of this distribution, there is at most one transition labeled by
either $y$ or $z$. Also, if $S_y$ and $S_z$ are the sets of states in
$\supp{\mu_0}$ with a transition labeled by $y$ and $z$,
respectively, then $\mu_0(S_y \cup S_z) \le \frac{1}{2}$. This is because, $U(S_y
\cup S_z) = \{r_{11}\}$ and $\mu_{10}(r_{11}) = \frac{1}{2}$.

Now, we define a relation $V$ between the states of $C$, $S_C$, and that of
$R_2$, $S_2$. The initial states are related. Let $c$ be an arbitrary state of
$C$. If $c$ has no transitions it is related to every state of $R_2$. If $c$ has
its transition labeled by $y$, it is related to $r_{21}$ and $r_{22}$.
Otherwise its transition is labeled by $z$ and it is related to
$r_{22}$ and $r_{23}$. To show that $V$ is a strong simulation, the
only non-trivial thing to consider is whether $\mu_0 \dpreorder_{V}
\mu_{20}$. For that, take an arbitrary set $X \subseteq \supp{\mu_0}$. If $X$ has any state
with no transitions, $V(X) = S_2$ and hence $\mu_0(X) \le
\mu_{20}(V(X)) = 1$. Otherwise, $X$ only has states with transitions
labeled by $y$ or $z$, \ie $X \subseteq S_y \cup S_z$, and by the observation made in the above paragraph,
$\mu_0(X) \le \frac{1}{2}$ whereas $\mu_{20}(V(X)) \ge \frac{2}{3}$. Thus,
$\mu_0(X) \le \mu_{20}(V(X))$. This shows that $V$ is a strong
simulation and we conclude that $C \preorder R_2$ immediately giving us a contradiction to the
assumption that $C$ is a counterexample.
\qed

\section{Proof of Lemma \ref{lem:cex_reactive_insuff}}

\begin{figure}
\centering
\includegraphics[scale=1.2]{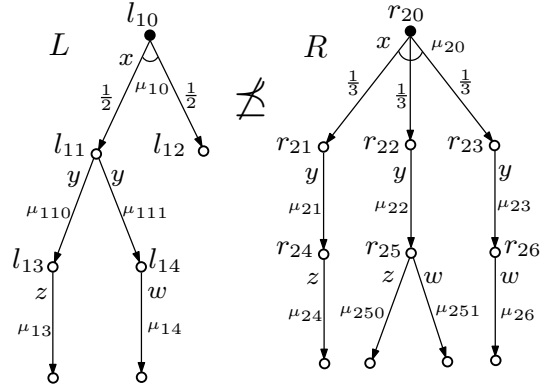}
\caption{There is no reactive counterexample to $L \preorder R$.}
\label{fig:cex_reactive_insuff}
\end{figure}

Consider the LPTS $L$ and the reactive LPTS $R$ in Figure \ref{fig:cex_reactive_insuff}. The
states along with the outgoing actions and distributions are labeled as in the
figure. By similar arguments as made in the proof of Lemma
\ref{lem:cex_fully_prob_insuff}, one can show that
$\mu_{110} \not\dpreorder_\preorder \mu_{23}$, $\mu_{111}
\not\dpreorder_\preorder \mu_{21}$ whereas $\mu_{110} \dpreorder_\preorder
\mu_{21}, \mu_{22}$ and $\mu_{111} \dpreorder_\preorder
\mu_{22}, \mu_{23}$. All these imply that $L \not\preorder R$. We are
interested in a counterexample to show this.

Assume that a reactive LPTS $C$ exists which serves as a counterexample. Again, similar
to the arguments made in the proof of Lemma \ref{lem:cex_fully_prob_insuff}, one can
show that $C$ can only be a tree with exactly one transition
$\mu_0$ labeled by $x$ from the initial state and for
every state in $\supp{\mu_0}$, there is at most one distribution labeled by
$y$ (because $l_{11}$ has transitions on no other action). Furthermore, if any state in the support of this distribution has any
transitions, all the transitions from all the states in the support will be
labeled by the same action and that too, by either $z$ or $w$.  
Then, if $S_y$ is the set of states in $\supp{\mu_0}$ with outgoing
distributions (which should only be labeled by $y$) then $\mu_0(S_y) \le \frac{1}{2}$.

Now, we define a relation $V \subseteq S_C \times S_2$, where $S_C$ is the set
of states of $C$, in a similar fashion. All the states in $C$ with no transitions
are related to every state in $S_2$. The initial states are related. For every other state
$c$, if it has a transition labeled by $z$ or $w$, $c$ is related
to all the states having a transition on $z$ or $w$, respectively and if it is
labeled by $y$, it is related to $r_{21}$ ($r_{23}$) and $r_{22}$ if the states in the
support have transitions on $z$ ($w$) and to all three of $r_{21}$, $r_{22}$ and
$r_{23}$ otherwise. One can similarly show that $V$ is a strong simulation
implying $C \preorder R$. This contradicts the assumption that $C$ is a
counterexample.
\qed

\section{Quotient is an Abstraction : $L \preorder L/\Pi$}
It suffices to show that $R = \{(s,c) | s \in c, c \in \Pi\}$ is a strong
simulation between $L$ and $L/\Pi$. Let $s R c$ and $s \trans{a} \mu$. As $s \in
c$, there exists a transition, by Definition~\ref{def:quotient_lpts}, $c
\trans{a} \mu_l$ such that for every $c' \in \Pi$, $\mu_l(c') = \sum_{s' \in c'}
\mu(s')$. Let $S \subseteq S_L$. Now, $\mu(S)$

  \begin{align*}
  & = \sum_{s' \in S} \mu(s') \\
  & = \sum_{c' \in R(S)} \sum_{s' \in c' \cap S} \mu(s') \\
  & \le \sum_{c' \in R(S)} \sum_{s' \in c'} \mu(s') \\
  & = \sum_{c' \in R(S)} \mu_l(c') \\
  & = \mu_l(R(S))
  \end{align*}

As $S$ is arbitrary, this implies from Lemma~\ref{lem:image_based} that $\mu
\dpreorder_R \mu_l$. Note that $s^0_L R [s^0_L]$.

We conclude that $L \preorder L/\Pi$.
\qed

\section{Proof of Lemma \ref{lem:ref_progress}}

Let $s_1$, $\mu_1$ and $M$ be as in Section \ref{sec:cegar}. Consider the first case
where $R(s_1) = \emptyset$. If $R_\old(s_1) = R_M(s_1)$, it follows that there exists $s \in
\supp{\mu_1}$ with $R_\old(s) \subset R_M(s)$. This can be easily proved by
contradiction and we omit this proof. As $M(s)$ is split into $R_\old(s)$ and
the rest, the strategy results in a finer partition. Otherwise,
$R_\old(s_1)$ is a strict subset of $R_M(s_1)$ and as $R(s_1) = \emptyset$, the
strategy splits $M(s_1)$ into $R_\old(s_1)$ and the rest which also results in
a finer partition.

Now, consider the second case where $R(s_1) \neq \emptyset$, $M(s_1) =
[s^0_L]_\Pi$ and $s^0_L \in R_\old(s_1) \setminus R(s_1)$. It follows that $R_\old(s_1)
\setminus R(s_1)$ is a non-empty, proper subset of $R_M(s_1)$ and hence, this
also results in a finer partition.
\qed


\end{document}